\def\altbox{\hspace{2mm}\nolinebreak\null\nolinebreak\hfill\Box}
\newenvironment{proof}{\noindent {\bf Proof:}}{$\altbox$\bigskip}
\newenvironment{tracex}{\ccspace{0.15}\noindent}{\ccspace{0.2}}
\newcommand{\ccspace}[1]{\vspace{#1cm}}
\newtheorem{theorem}{Theorem}[section]
\newtheorem{definition}[theorem]{Definition}
\newtheorem{proposition}[theorem]{Proposition}
\newtheorem{example}[theorem]{Example}
\def\name#1{\mbox{\sc #1}}
\def\sos#1#2{{\def\arraystretch{1.6}\begin{array}{c}#1\\\hline
#2\end{array}}}
\newcommand{\Nil}{{\sf nil}}
\newcommand{\ignore}[1]{}
\newcommand{\gor}{\;\big|\;}
\newcommand{\rec}{{\sf rec} \:}
\newcommand{\nar}[2]{\xrightarrow{#1}_{#2}}
\newcommand{\sar}[1]{\nar{#1}{}}
\newcommand{\ur}{{\cal U}}
\newcommand{\B}{{\mathbb A}} % the set of basic actions
\newcommand{\Bt}{\B_{\tau}} % the set of basic actions + tau
\newcommand{\X}{{\cal X}}
\newcommand{\Pg}{\tilde{\PG}}
\newcommand{\PG}{{\mathbb P}}
\newcommand{\Sg}{\tilde{\SG}}
\newcommand{\SG}{{\mathbb S}}
\newcommand{\clean}{{\sf clean}}
\newcommand{\unlab}{{\sf unmark}}
\newcommand{\fase}{\texttt{FASE}}
\newcommand{\cwbnc}{\texttt{CWB-NC}}
\newcommand{\rop}{\triangleright}
\newcommand{\bool}{{\mathbb B}}
\newcommand{\kvalues}{{\mathbb K}}
\newcommand{\fa}{\mathit{f}}
\newcommand{\tr}{\mathit{t}}
\newcommand{\Bv}{{\sf B}}
\newcommand{\Cv}{{\sf C}}
\newcommand{\rtb}[1]{{\it b_{{\rm #1}}rt}}
\newcommand{\rfb}[1]{{\it b_{{\rm #1}}rf}}
\newcommand{\wtb}[1]{{\it b_{{\rm #1}}wt}}
\newcommand{\wfb}[1]{{\it b_{{\rm #1}}wf}}
\newcommand{\wtc}[1]{{\it c_{{\rm #1}}wt}}
\newcommand{\wfc}[1]{{\it c_{{\rm #1}}wf}}
\newcommand{\rtc}[1]{{\it c_{{\rm #1}}rt}}
\newcommand{\rfc}[1]{{\it c_{{\rm #1}}rf}}
\newcommand{\Kv}{{\sf K}}
\newcommand{\rk}[1]{{\it kr #1}}
\newcommand{\wk}[1]{{\it kw #1}}
\newcommand{\rc}[1]{{\it c_{#1}r}}
\newcommand{\wc}[1]{{\it c_{#1}w}}
\newcommand{\ruk}{\rk{1}}
\newcommand{\rdk}{\rk{2}}
\newcommand{\wuk}{\wk{1}}
\newcommand{\wdk}{\wk{2}}
\newcommand{\checky}{\ding{51}}
\newcommand{\checkn}{\ding{55}}
\newcommand{\vp}{{\sf PV}}
\newcommand{\req}{{\tt req}}
\newcommand{\cs}{{\sf cs}}
\newcommand{\dekker}{{\it Dekker}}
\newcommand{\dijkstra}{{\it Dijkstra}}
\newcommand{\knuth}{{\it Knuth}}
\newcommand{\peterson}{{\it Peterson}}
\newcommand{\lamport}{{\it Lamport}}
\newcommand{\BK}{{\sf BK}}
\newcommand{\dijkstraa}{{\sf Dijkstra}} 
\newcommand{\knutha}{{\sf Knuth}} \newcommand{\knuthb}{\knutha}
\newcommand{\petersona}{{\sf Peterson}} 
\newcommand{\lamporta}{{\sf Lamport}} 
\newcommand{\proc}{{\tt P}}
\newcommand{\ppath}[5]{#1 & #2 & #3 & #4 & #5}
\title{Automated Analysis of MUTEX Algorithms with $\fase$
\thanks{
This work was supported by the PRIN Project `Paco:Performability-Aware
Computing: Logics, Models, and Languages'.}}
\author{Federico Buti, Massimo Callisto De Donato, Flavio Corradini, Maria Rita Di Berardini
\institute{School of Science and Technology, University of Camerino}
\email{\{federico.buti, massimo.callisto, flavio.corradini,
mariarita.diberardini\}@unicam.it}
\and
Walter Vogler
\institute{Institut f\"ur Informatik, Universit\"at Augsburg}
\email{vogler@informatik.uni-Augsburg.de}}
\begin{document}
\maketitle
 
\begin{abstract}
In this paper we study the liveness of several MUTEX solutions by
representing them as processes in PAFAS$_s$, a CCS-like process algebra
with a specific operator for modelling non-blocking reading behaviours.
Verification is carried out using the tool \fase, exploiting a
correspondence between violations of the liveness property and a special
kind of cycles (called {\em catastrophic cycles}) in some transition
system. We also compare our approach with others in the literature. The aim
of this paper is twofold: on the one hand, we want to demonstrate the
applicability of \fase\  to some concrete, meaningful examples; on the
other hand, we want to study the impact of introducing non-blocking
behaviours in modelling concurrent systems.
\end{abstract}

\section{Introduction}
MUTEX algorithms can exhibit an intricate behaviour and their correctness
can be hard to establish, because our intuitive notion of the program flow
can be misled by the fact that a shared variable may change from one
statement to the other, even if the process we are tracing does not modify
it. There are two kinds of properties to verify: the {\em safety property}
that two competing processes are never in their critical sections at the
same time, and the {\em liveness property} that a requesting process will
always enter its critical section. The first kind of property can be proven
fairly easily because only the static configuration of the system at any
time must be taken into account. The liveness property is much more
difficult to prove since it usually requires some fairness assumption.

In~\cite{CVJ02}, we have developed the process description language PAFAS,
a CCS-like~\cite{Mil89} process algebra  originally introduced as a tool
for evaluating the worst-case efficiency of asynchronous systems. Processes
are compared via a variant of the testing approach of De~Nicola and
Hennessy~\cite{DH84} where tests are test environments (or user behaviours)
together with a time bound. A process is embedded into the environment (via
parallel composition) and satisfies a (timed) test, if success is reached
before the time bound in {\em every} run of the composed system, i.e.\ even
in the worst case. This gives rise to a {\em faster-than} preorder relation
over processes that is naturally an {\it efficiency preorder}.
% 
% In~\cite{CDV06a} we have related weak fairness and a timed operational
% semantics by resorting to the process description language
% PAFAS~\cite{CVJ02}, a CCS-like \cite{Mil89} process algebra  originally
% introduced as a powerful tool for evaluating the worst-case efficiency of
% asynchronous systems. In~\cite{CVJ02}, processes are compared via a
% variant % of the testing approach of De~Nicola and Hennessy~\cite{DH84}.
%Tests considered in~\cite{CVJ02} are test environments (or user
% behaviours)  together with a time bound. A process is embedded into the
% environment (via parallel composition) and satisfies a (timed) test, if
% success is reached  before the time bound in {\em every} run of the
% composed system, i.e.\  even  in the worst case. This gives rise to a
% {\em faster-than} preorder  relation  over processes that is naturally
% an {\it efficiency preorder}.  This efficiency preorder can be
% characterised as inclusion of a special  kind of refusal traces, which
% provides a decidability result for finite  state processes. 
%
In~\cite{CV05} it has been shown that the test-based preorder
in~\cite{CVJ02} can equivalently be defined on the basis of a performance
function that gives the worst-case time needed to satisfy any test
environment. Whenever the above testing scenario is adapted to a setting
where tests belong to a very specific, but often occurring, class of {\em
request-response} user behaviours (processes serving these users receive
requests via an $in$-action and provide responses via an $out$-action) this
performance function is {\em asymptotically linear}. This provides us with
a quantitative measure of systems performance
% , a function from natural numbers to natural numbers - called
%  {\em response performance function} -
that measures how fast the system under consideration responds to requests
from the environment. In~\cite{CV05} we have also shown how to determine
this performance measure for finite-state processes. This result only holds
for request-response processes (i.e. processes that can only perform $in$
and $out$ as visible actions) that pass certain sanity checks: they must
not produce more responses than requests, and they must allow requests and
provide responses in finite time. While the first requirement can easily
be read off from the transition system, violation of the latter one
is characterised as the existence of a special kind of cycles (called
{\em catastrophic cycles}) in a reduced transition system (we remind the reader to \cite{CV05} for the complete description of such a reduction).
% A cycle is catastrophic if it can let time pass
% without any useful action (i.e. either $in$ or $out$). 
Finally, a corresponding tool \fase\ that allows the automated
evaluation of systems performance function has been developed; see
\cite{BCCDV09} for a first informal account. 

The notion of timing in PAFAS is strongly related to (weak) {\em fairness
of actions} which requires that an action must be performed whenever it is
enabled continuously in a run. We have shown that each everlasting (or
non-Zeno) timed process execution is fair and vice versa, where fairness is
defined in an intuitive but complicated way in the spirit
of~\cite{CostaS84,CostaS87}. In fact, we have proven this correspondence
for fairness of actions and, with a modified notion of timing, for fairness
of components. These characterisations have been used in~\cite{CDV06} to
prove that Dekker's algorithm is live under the assumption of fairness of components
but not under the assumption of fairness of actions. This result can be
improved by means of suitable assumptions about the hardware, namely we
must assume that reading a value from a storage cell is non-blocking; to
model this we have introduced specific reading prefixes for PAFAS
in~\cite{CDV08tr}. 

Here, we add reading in the form of a read-set prefix $\{a_1, \ldots, a_n\}
\rop P$ (the new process description language is called PAFAS$_s$) which
behaves as $P$ but, like a variable or a more complex data structure, can
also be read with actions in the set $\{a_1, \ldots, a_n\}$. Since being
read does not change the state, each action $a_i$ ($i=1,\ldots,n$) can be
performed repeatedly until the execution of some ordinary action of $P$ .
% Thus, e.g. $\{a, b\} \rop c.\Nil$ can perform any number of $a$'s and 
% $b$'s until it terminates via the ordinary $c$.

A first key property of non-blocking actions is that they have a direct
impact on timed behaviour of concurrent systems (see the examples at the
end of Section~\ref{PAFAS-s}). They are also an important feature for
proving the liveness of MUTEX solutions under the assumption of weak
fairness of actions. Indeed, one result in~\cite{CDV08tr} shows that
Dekker's algorithm is live when assuming fairness of actions, provided we
regard as non-blocking the reading of a variable as well as its writing in
the case that the written value equals the current one. It had long been an
open problem how to achieve such a result in a process algebra (see
e.g.~\cite{Walker89}). In~\cite{CDV08tr} we have also discovered an
interesting connection between liveness of MUTEX algorithms and
catastrophic cycles; we have shown that violations of the liveness property
can be traced back to catastrophic cycles of a suitably modified process
(cf. Section~\ref{algos}). Even though \fase\ was originally developed for
automatically checking whether a process of (original) PAFAS has a
catastrophic cycle, it has been recently adapted to a setting with reading
actions. This has opened the way to check automatically the liveness
property for MUTEX algorithms.

In this paper we use \fase\ to study the liveness of four MUTEX
solutions--Peterson's, Lamport's, Dijkstra's and Knuth's algorithms (see
\cite{Walker89} and references therein)--under the assumption of
fairness of actions. Our aim is twofold: we want to show the applicability
of \fase\ to  concrete, meaningful examples, but also to stress the
impact of introducing non-blocking actions in PAFAS (and in general in
modelling concurrent systems). We prove that Peterson is live provided we
regard the reading of a variable as a non-blocking action. We also show
that the liveness of Dijkstra and Knuth cannot be ensured even if (as
in~\cite{CDV08tr}) we consider as non-blocking the reading of a variable
and its writing in the case the written value equals the current one. With
the same assumption on program variables, we finally prove that Lamport
(which is not symmetric) is live for just one of the two competing
processes, i.e.\ it is not live. 
%%%%%%%%%%%%%%%%%%%%%%%%%%%%%%%%%%%%%%

To even more emphasize the role of non-blocking reading in proving
liveness property, we have implemented some ideas taken
from~\cite{Walker89}
%  where Walker aimed at automatically verifying six
% MUTEX algorithms (including Dekker and the ones we consider in this
% paper) with the Concurrency Workbench~\cite{CleavelandPS89}. 
that describe how fairness can be assumed in a CCS setting in order to
enable a proof of liveness. At the time of writing, these ideas could not
be expressed for the use of the Concurrency
Workbench~\cite{CleavelandPS89}, but this is now possible within newer
tools like the Concurrency Workbench of the New Century~\cite{CLS00}. A
comparison of the results provided by the two approaches shows that the
liveness of Dekker's and Peterson's algorithms strongly depends on the
liveness of the hardware. This is exactly the sort of consideration for
which non-blocking actions provide a formal treatment.

We proceed as follows: In Section~\ref{PAFAS-s} we recall PAFAS$_s$, its
timed operational semantics and the correspondence between fair traces and
everlasting timed computations. In Section~\ref{algos} we introduce the
four algorithms and provide our results. Finally, in
Section~\ref{sec:walker} we compare our approach with that
in~\cite{Walker89}.
%%%%%%%%%%%%%%%%%%%%%%%%%%%%%%%%%%%%
\section{A process algebra for describing reading
behaviours}\label{PAFAS-s}
%%%%%%%%%%%%%%%%%%%%%%%%%%%%%%%%%%%%
PAFAS~\cite{CVJ02} is a CCS-like process description language~\cite{Mil89}
(with a {\em TCSP}-like parallel composition), where actions are atomic and
instantaneous, but have associated an upper time bound (either $0$ or $1$,
for simplicity) as a maximal delay for their execution. As shown
in~\cite{CVJ02}, these upper time bounds can be used to evaluate
efficiency, but they do not influence functionality (which actions are
performed); so compared with CCS also PAFAS treats the full functionality
of asynchronous systems. In~\cite{CDV08tr}, PAFAS has been extended with a
new operator $\rop$ to represent non-blocking behaviour of processes.
Intuitively, $\{\alpha_1, \ldots, \alpha_n\}\rop P$ models a process like a
variable or a more complex data structure that behaves as $P$ but can
additionally be read with $\alpha_1,\ldots, \alpha_n$: since being read
does not change the state, each action $\alpha_i$ can be performed
repeatedly without blocking a synchronization partner as described below.
We use the following notation: $\B$ is an infinite set of \emph{visible
actions}; the additional action $\tau$ represents a internal activity,
unobservable for other components, and $\Bt=\B\cup\{\tau\}$. Elements of
$\B$ are denoted by $a,b,c,\dots$ and those of $\Bt$ by
$\alpha,\beta,\dots$. Actions in $\Bt$ can let time $1$ pass before their
execution, i.e.\ 1 is their maximal delay. After that time, they become
\emph{urgent} actions written $\underline{a}$ or $\underline{\tau}$; these
have maximal delay 0. The set of urgent actions is denoted by
$\underline{\B}_\tau = \{\underline{a} \:|\: a \in \B \}\cup \{
\underline{\tau} \}$ and is ranged over by
$\underline{\alpha},\underline{\beta},\dots$. Elements of $\Bt \cup
\underline{\B}_\tau$ are ranged over by $\mu$ and $\nu$. We also assume
that, for any $\alpha \in \Bt$, when time elapses $\underline{\underline{\alpha}} =
\underline{\alpha}$. $\X$ (ranged over by $x, y, z, \ldots$) is the set of
process variables, used for recursive definitions. $\Phi \: : \: \Bt
\rightarrow \Bt$ is a {\it general relabelling function} if the set $\{
\alpha \in \Bt \: | \: \emptyset \neq \Phi ^{-1}(\alpha) \neq \{ \alpha
\}\}$ is finite and $\Phi(\tau)= \tau$. Such a function can also be used to
define {\em hiding}: $P/A$, where the actions in $A$ are made internal, is
the same as $P[\Phi_A]$, where the relabelling function $\Phi_A$ is defined
by $\Phi_A(\alpha)= \tau$ if $\alpha \in A$ and $\Phi_A(\alpha)=\alpha$ if
$\alpha \notin A$. 

Below, initial processes are just processes of a standard process algebra
extended with $\rop$, while general processes are those reachable from the
initial ones according to the operational semantics.
%
% \begin{definition}\label{processes}\rm ({\em timed process terms}) 
The set $\Sg_1$ of {\em initial (timed) process terms} $P$ and $\Sg$ of
(general) {\em (timed) process terms} $Q$ are generated by:
% the following grammar:

\vspace{0.1cm}
\hspace{3cm}
$P::= \Nil \gor x \gor \alpha. P \gor \{\alpha_1, \dots ,\alpha_n\} \rop P
\gor P+P \gor P \:\|_A \: P \gor P[\Phi] \gor \rec x.P$

\hspace{3.2cm} 
$Q ::= P \gor \underline{\alpha} .P \gor \{\mu_1, \dots ,\mu_n\}
\triangleright Q \gor Q+Q \gor Q \:\|_A\: Q \gor Q[\Phi] $

\vspace{0.1cm}

where  $\Nil$ is a constant, $x \in \X$, $\alpha \in \Bt$,  $\Phi$ is a
general relabelling function and $A\subseteq \B$ possibly infinite;
$\{\alpha_1, \ldots,\alpha_n\}$   and $\{\mu_1,\ldots, \mu_n\}$ are (finite
and nonempty) subsets of  $\Bt$ and $\Bt \cup \underline{\B}_\tau$, resp.
We assume that the latter kind of read-sets can only contain a copy (either
lazy or urgent) of each action $\alpha$, i.e.\ $\{\mu_1,\ldots, \mu_n\}$
cannot contain both $\alpha$ {\em and} $\underline{\alpha}$ for any $\alpha
\in \Bt$. By the operational semantics, terms not satisfying this property
are not reachable from initial ones anyway.  
%
%
% We say that a variable $x \in \X$ is {\em guarded} in $Q$ if it only
% appears in $Q$ within the scope of an action-prefix $\mu. ()$ with $\mu
% \in\Bt \cup \underline{\B}_\tau$. A term $Q$ is {\em guarded} if each
% occurrence of a variable is guarded in this sense. We also assume that
% recursion is guarded, i.e.\ for each term $\rec x. Q$, $x$ is guarded in
% $Q$. 
A process term is {\em closed} if every variable $x$ is bound by the
corresponding $\rec x$-operator; the set of closed timed process terms in
$\Pg$ and $\Pg_1$, simply called {\em processes} and {\em initial
processes} resp., is denoted by $\PG$ and $\PG_1$ resp.
% \end{definition}

% Most of the operators have their usual intuition. 
$\Nil$ is the Nil-process: it cannot perform any action but can let time
pass without limits. $\alpha.P$ and $\underline{\alpha}.P$ is
action-prefixing known from CCS. Process $\alpha.P$ performs $\alpha$
within time 1; i.e.\ it can perform $\alpha$ immediately and evolve to $P$
(as usual in CCS), or let one time unit pass and become
$\underline{\alpha}.P$. In this latter case, $\alpha$ cannot be further
delayed (i.e.\ it must occur or be deactivated) unless
$\underline{\alpha}.P$ has to wait for a synchronisation on $\alpha \neq
\tau$. Our processes are {\em patient}: as a stand-alone process
$\underline{a}.P$ has no reason to wait, but as a component of a larger
system, e.g.\ $\underline{a}.P \,\|_{\{a\}}\, a.\Nil$, it can wait for a
synchronisation on $a$; this can take up to time 1 since component $a.\Nil$
can idle so long. $\{\mu_1,\ldots,\mu_n\} \rop Q$ can perform actions from
$\{\mu_1,\ldots,\mu_n\}$ without changing state (including urgencies and,
hence, the syntax of the term itself), and the actions of $Q$ in the same
way as $Q$, i.e.\ the read-set is removed after such an action. $Q_1 + Q_2$
is a non-deterministic choice between two conflicting processes $Q_1$ and
$Q_2$. \ $Q_1$ and $Q_2$ run in parallel in $Q_1\|_A Q_2$ and have to
synchronize on all actions from $A$. $P[\Phi]$ behaves as $P$ but with
actions changed according to  $\Phi$. $\rec x. P$ models a recursive
definition; we often use equations to define recursive processes. 
%%%%%%%%%%%%%%%%%%%%%%%%%%%%%%%%%%%%%%%%%%%%%%%%%%%%%%%%%%%%%%%%%%%%%%%%%%
\paragraph{Functional and temporal behaviour of PAFAS$_s$ processes.}
\label{secTimedSem}
%%%%%%%%%%%%%%%%%%%%%%%%%%%%%%%%%%%%%%%%%%%%%%%%%%%%%%%%%%%%%%%%%%%%%%%%%%
We first introduce the transitional semantics describing the functional
behaviour of PAFAS$_s$ processes, i.e.\ which actions they can perform.
%%%%%%%%%%%%%%%%%%%%%%%%%%%%%%%%%%%%%%%%%%%%%%%%%%%%%%%%%%%%%%%%%%%%%%%%%%
\begin{definition}\label{PACTSOS}\rm({\em functional operational
semantics})
Let $Q\in \Sg$ and $\alpha \in \Bt$. The SOS-rules defining the transition
relation  $\nar{\alpha}{}\subseteq(\Sg \times \Sg)$ (the {\em action
transitions}) are given in Table~\ref{Behaviour}\footnote{We do here
without $\clean$ and $\unlab$, used e.g.\ in~\cite{CDV06} to get
a closer relationship between states of untimed fair runs and timed
non-Zeno runs. They do not change the behaviour (up to an injective
bisimulation) and would complicate the setting.}. As usual, we write
$Q\nar{\alpha}{}Q'$ if $(Q,Q')\in\nar{\alpha}{}$ and $Q\nar{\alpha}{}$ if
there exists a $Q'\in \Sg$ such that $(Q,Q')\in\nar{\alpha}{}$. Similar
conventions will apply later on. We also define the set of the {\em
activated} or enabled actions to be the set of all $\alpha$ such that
$Q \nar{\alpha}{}$.
\end{definition}

\vspace{-0.8cm}
\begin{table}[tbh]
\small
\[
\begin{array}{c}
\name{Pref$_s$}\; \sos{\mu\in\{\alpha, \underline{\alpha}\}}
{\mu.P \nar{\alpha}{}P}\qquad 
%%%%%%%%
\name{Sum$_s$}\; \sos{Q_1\nar{\alpha}{}Q'}{Q_1+Q_2
\nar{\alpha}{} Q'}  \qquad 
%%%%%%%%%
\name{Read$_{s1}$}\; \sos{ \mu_i \in \{\alpha, \underline{\alpha}\}}
{\{\mu_1, \ldots, \mu_n\} \rop Q \nar{\alpha}{} \{\mu_1, \ldots,
\mu_n\} \rop Q} \\
%%%%%%%%%
\name{Read$_{s2}$}\; \sos{Q \nar{\alpha}{} Q'} {\{\mu_1, \ldots, \mu_n\}
\rop Q \nar{\alpha}{} Q'}  \quad 
%%%%%%%%%
\name{Par$_{s1}$} \sos{\alpha\notin A,\;Q_1\nar{\alpha}{}Q'_1}
{Q_1\|_A Q_2 \nar{\alpha}{} Q'_1\|_A Q_2}\quad
%%%%%%%%%
\name{Par$_{s2}$} \sos{\alpha\in A,\;Q_1\nar{\alpha}{}Q'_1,\;
Q_2\nar{\alpha}{}Q'_2}{Q_1\|_A Q_2 \nar{\alpha}{} Q'_1\|_A Q'_2}\\
%%%%%%%%%
\name{Rel$_s$} \sos{Q\nar{\alpha}{}Q'}
{Q[\Phi]\nar{\Phi(\alpha)}{}Q'[\Phi]}\quad
%%%%%%%%%
\name{Rec$_s$} \sos{Q \{\rec x.Q/x\} \nar{\alpha}{} Q'}
{\rec x.Q \nar{\alpha}{} Q'}\\
\end{array}\]
\caption{Functional behaviour of PAFAS$_s$ processes}
\label{Behaviour}
\end{table}

Rules in Table~\ref{Behaviour} are quite standard. Timing can be
disregarded in \name{Pref$_s$}: when an action is performed, one cannot see
whether it was urgent or not, and thus $\underline{\alpha}.P \nar{\alpha}{}
P$; furthermore, component $\alpha.P$ has to act {\em within} time 1, i.e.\
it can also act immediately, giving $\alpha.P \nar{\alpha}{}P$. Rules
\name{Read$_{s1}$} and \name{Read$_{s2}$} say that $\{\mu_1,\ldots, \mu_n\}
\rop Q$ can either repeatedly perform one of its non-blocking actions or
evolve as $Q$. Other rules are as expected; symmetric rules have been
omitted.
Actually, the above SOS-rules describe reading in a sensible way only under
some syntactic restrictions, cf.~\cite{CDV08tr}. All the example processes
we consider here meet these restrictions.
% An essential idea of reading is that it does not change the state of a
% process and therefore does not block other actions. With the above
% rules,  $\{a\} \rop \{b\} \rop Q \nar{b}{} \{b\} \rop Q$
% and $\{a\}\rop Q' + \{b\} \rop Q \nar{b}{} \{b\} \rop Q$, violating this
% idea. 
% To exclude such terms in~\cite{CDV08tr} we only consider {\em
% proper terms}. Intuitively, these are terms that behave accordingly to
% the  intuition that reading actions does not change the state (so, we
% admit $\{a,b\} \rop Q$ but {\em not} $\{a\} \rop \{b\} \rop Q$) and are
% closed w.r.t. both functional and temporal transitions. All terms we
% consider in this paper are proper (see~\cite{CDV08tr} for a precise
% definition).

We now define the refusal traces of a term $Q \in \Sg$. Intuitively a
refusal trace records, along a computation, which actions process $Q$ can
perform ($Q \nar{\alpha}{} Q'$, $\alpha \in \Bt$) and which actions $Q$ can
refuse to perform when time elapses ($Q\nar{X}{r} Q'$, $X \subseteq \B$).
$Q \nar{X}{r} Q'$ is called a (partial) {\it time-step}. The actions listed
in $X$ are not urgent; hence $Q$ is justified in not performing them, but
performing a time step instead. This time step is partial because it can
occur only in contexts that can refuse the actions not in $X$. If $X = \B$
then $Q$ is fully justified in performing this time-step; i.e., $Q$ can
perform it independently of the environment. In such a case, we say that
$Q$ performs a {\it 1-step} written $Q \nar{1}{}Q'$; moreover we often
write $\underline{Q}$ (the urgent version of $Q$) instead of $Q'$. To
provide the reader with a better intuition we observe that any $Q$ can
perform a 1-step whenever it can refuse to perform, because not urgent, all
its activated actions. In the next definition, $\ur(\{\mu_1, \ldots,
\mu_n\}) =  \{\alpha \,|\, \mu_i = \underline{\alpha} \mbox{ for some } i
\in [1,n]\}$ is the set of urgent actions in $\{\mu_1,\ldots,\mu_n\}$. 

\begin{definition}\label{PAFASRT}\rm ({\em refusal transitional semantics})
The SOS-rules in Table~\ref{rt-semantics} define $\nar{X}{r} \subseteq (\Sg
\times \Sg)$ where $X \subseteq \B$.
\end{definition}
%%%%%%%%%
\begin{table}[tbh]
\small
\[
\begin{array}{c}
\name{Nil$_{t}$}\; \sos{}{\Nil \nar{X}{} \Nil}\quad
%%%%%%%%%
\name{Pref$_{t1}$}\; \sos{}{\alpha.P \nar{X}{r}
\underline{\alpha}.P}\quad
%%%%%%%%%
\name{Pref$_{t2}$}\; \sos{\alpha \notin X \cup \{\tau\}}
{\underline{\alpha}.P \nar{X}{r} \underline{\alpha}.P} \quad
%%%%%%%%%
\name{Sum$_{t}$}\; \sos{Q_i\nar{X}{r}Q'_i \mbox{ for }
i=1,2}{Q_1+Q_2 \nar{X}{r} Q'_1 + Q'_2} \\
%%%%%%%%%
\name{Read$_{t}$}\; \sos{Q \nar{X}{r} Q', \; \ur(\{\mu_1, \ldots,
\mu_n\} ) \cap (X \cup \{\tau\}) = \emptyset}
{\{\mu_1, \ldots, \mu_n\} \rop Q \nar{X}{r}
\underline{\{\mu_1, \ldots, \mu_n\}} \rop Q'} \qquad
%%%%%%%%%
\name{Rel$_{t}$} \sos{Q\nar{\Phi^{-1}(X \cup \{\tau\})\backslash
\{\tau\}}{r}Q'} {Q[\Phi]\nar{X}{r}Q'[\Phi]}\\
%%%%%%%%%
\name{Par$_{t}$} \sos{Q_i \nar{X}{r} Q'_i \mbox{ for } i=1,2, X
\subseteq (A \cap (X_1 \cup X_2)) \cup ((X_1 \cap X_2)\backslash A)}
{Q_1\|_A Q_2 \nar{X}{r} Q'_1\|_A Q'_2} \qquad
%%%%%%%%%
\name{Rec$_t$} \sos{Q\{\rec x.Q/x\} \nar{X}{r} Q'}{ \rec x.Q \nar{X}{r}
Q'}\\
\end{array}\]
\caption{Refusal transitional semantics of PAFAS$_s$ processes}
\label{rt-semantics}
\end{table}
%%%%%%%%%%%%%%%%%%%%%
Rule \name{Pref$_{t1}$} says that a process $\alpha.P$ can let time pass
and can refuse to perform any action, while rule \name{Pref$_{t2}$} says
that a process $\underline{\alpha}.P$, can let time pass but action
$\alpha$ cannot be refused. Process $\underline{\tau}.P$ cannot let time
pass and cannot refuse any action; in any context, $\underline{\tau}.P$ has
to perform $\tau$ before time can pass further. Rule~\name{Par$_{t}$}
defines which actions a parallel composition can refuse during a time-step.
The intuition is that $Q_1 \|_A Q_2$ can refuse an action $\alpha$ if
either $\alpha \notin A$ ($Q_1$ and $Q_2$ can do $\alpha$ independently)
and both $Q_1$ and $Q_2$ can refuse $\alpha$, or $\alpha \in A$ ($Q_1$ and
$Q_2$ are forced to synchronise on $\alpha$) and at least one of them can
refuse the action, i.e.\ can delay it. Thus, an action in a parallel
composition is urgent (cannot be further delayed) only when all
synchronising \lq local\rq\ actions are urgent. Rule \name{Read$_{t}$}
says that $\{\mu_1, \ldots, \mu_n\} \rop Q$ can refuse the same actions as
$Q$ provided these are not urgent in $\{\mu_1, \ldots, \mu_n\}$;
moreover, as for the action-prefixing,  process $\{\mu_1, \ldots, \mu_n\}
\rop Q$ cannot let time pass and cannot refuse any action,  whenever one
of the urgent actions in $\{\mu_1, \ldots, \mu_n\}$ is a $\tau$. Other
rules are as expected. Again symmetric rules have been omitted.

In \cite{CVJ02}, it is shown that inclusion of refusal traces characterises
a testing-based faster-than relation that compares processes w.r.t. their
worst-case efficiency. In this sense, e.g.\ $P = \{a\} \rop b.\Nil$ is
faster than the functionally equivalent $P'  =\rec x.\, a.x +b.\Nil$, since
only the latter has the refusal traces $1a(1a)^{\ast}$.  After $1a$, $P'$
returns to itself (recursion unfolding creates fresh $a$ and $b$);
intuitively, $b$ is disabled during the occurrence of $a$, so $a$ and also
$b$ can be delayed again. In contrast, after a 1-step and any number of
$a$'s, $P$ turns into $\{\underline{a}\} \rop \underline{b}.\Nil$ and no
further 1-step is possible; read actions do not block or delay other
activities, they make processes faster. If $a$ models the reading of a
value stored by $P$ or $P'$ and two parallel processes want to read it,
this should take at most time 1 in a setting with non-blocking reads. And
indeed, whereas $P' \,\|_{\{a\}}\, (a.\Nil \,\|_{\emptyset}\, a.\Nil)$ has
the refusal trace $1a1a$, this behaviour is not possible for $P
\,\|_{\{a\}}\, (a.\Nil\,\|_{\emptyset}\, a.\Nil)$ since, when performing
$1a$, this evolves into e.g.\ $\{\underline{a}\} \rop \underline{b}.\Nil
\,\|_{\{a\}}\, (\Nil \,\|_{\emptyset}\, \underline{a}.\Nil)$, and
then 1 is not possible.

Another application of refusal traces is the modelling of {\it weak
fairness of actions}. Weak fairness requires that an action must be
performed whenever continuously enabled in a run. Thus, a run from $P$
above with infinitely many $a$'s is not fair; the read action does not
block $b$ or change the state, so the same $b$ is  always enabled but never
performed. In contrast, if $P'$ performs $a$, a fresh $b$ is created; in
conformance to~\cite{CostaS84}, a run from $P'$ with infinitely many $a$'s
is fair. In~\cite{CDV08tr}, generalising \cite{CDV06}, fair traces for
PAFAS$_s$ are first defined in an intuitive, but very complex fashion in
the spirit of~\cite{CostaS84,CostaS87} and then characterised: {\em they
are the sequences of visible actions occurring in transition sequences with
infinitely many 1-steps}. 
% Intuitively, this is easy to see: after a 1-step, all activated actions
%of  any given $Q$ become urgent and must be performed (or be deactivated)
% before time may pass further; this ensures that no action can be
% continuously enabled along a computation with infinitely many 1-steps. 
Due to lack of space, we cannot properly formulate this as a theorem, but
take it as a {\bf definition} of \emph{fair traces} instead. With this,
infinitely many $a$'s are a fair trace of $P'$ since it can repeat $1a$
indefinitely, but the fair traces of finite-state $P$ are those that end
with $b$. We use this definition of fair traces to study liveness property
of MUTEX solutions we consider in the next section. 

For request-response processes the transition system (built according to
Def.~\ref{PACTSOS} and~\ref{PAFASRT}) must be reduced as described
in~\cite{CV05}; a cycle in the resulting system is catastrophic if it
contains (at least) one time step but no $in$- or $out$-transition.

\section{Liveness property of MUTEX algorithms}\label{algos}
In this section we use the approach of~\cite{CDV08tr} to study the liveness
of four different MUTEX solutions: Peterson's, Lamport's, Dijkstra's and 
Knuth's algorithm. We first translate the algorithms into PAFAS$_s$ and
then use \fase\ to automatically decide whether each of them is live or
not. Negative results are discussed by means of counterexamples, i.e.\ fair
violating traces which are built from catastrophic cycles detected with
\fase. The results of this section are collected in
Table~\ref{tab:comparing2}.

\paragraph{Peterson's algorithm}\label{sec:peterson}
%%%%%%%%%%%%%%%%%%%%%%%%%%%%%%%%%%%%%%%%%%%
There are two processes $\proc_1$ and $\proc_2$, two Boolean-valued
variables $b_1$ and $b_2$, whose initial value is false, and a variable
$k$, which takes values in $\{1,2\}$ and whose initial value is arbitrary.
The $b_i$ variables are ``request'' variables and $k$ is a ``turn''
variable: $b_i$ is true if $\proc_i$ is requesting entry to its critical
section and $k$ is $i$ if it is $\proc_i$'s turn to enter its critical
section. Only $\proc_i$ writes $b_i$, but both processes read it. Process 
$\proc_i$ (with $i=1,2$) is described as follows;  $j$ is the index of the
other process:

\vspace{0.2cm}
\small
\begin{algorithm}{Peterson}{}
\begin{WHILE}{$true$}
\langle \mbox{non-critical section} \rangle; \\
b_i \= true; \quad k \= j;\\
{\bf while} \; b_j \mbox{ and } k = j \; {\bf do \; skip};\\
\langle \mbox{critical section} \rangle; \\
b_i \= false;
\end{WHILE}
\end{algorithm}
\normalsize
In our translation of the algorithm into PAFAS$_s$, we use essentially
the same coding as Walker in~\cite{Walker89}. Each variable is represented
as a family of processes. For example, the process $\Bv_1(\fa)$ denotes the
variable $b_1$ with value false. The {\em sort} of $\Bv_1(\fa)$ (i.e.\ the
set of actions it can ever perform) is $\{\rfb{1},\rtb{1}, \wfb{1},
\wtb{1}\}$ %where $\rfb{1}$ and $\rtb{1}$ represent the actions of reading
% the values false and true from $b_1$,  and $\wfb{1}$ and $\wtb{1}$
% represent, resp., the writing of the values false and true into $b_1$.
Unlike~\cite{Walker89}, we model the actions that correspond to the reading
of a variable (e.g.\ $\rtb{1}$ and $\rtb{2}$) as non-blocking. Below, we
let $\bool=\{\fa, \tr\}$ and $\kvalues=\{1,2\}$.
%%%%%%%%%%%%%%%
\begin{definition}\label{pet-algo1}\rm({\sl Peterson's algorithm})
Let $i\in\{1,2\}$. Program variables are represented as follows:

\hspace{0.5cm}
$
\begin{array}{l l}
\Bv_i(\fa) = \{\rfb{\it i}\} \rop (\wfb{{\it i}}.\Bv_i(\fa) + \wtb{\it
i}.\Bv_i(\tr)) \qquad &
\Kv(1) =  \{\ruk\} \rop (\wuk.K(1) + \wdk .\Kv(2))\\
\Bv_i(\tr) = \{\rtb{\it i}\} \rop (\wtb{{\it i}}.\Bv_i(\tr) + \wfb{\it
i}. \Bv_i(\fa)) &
\Kv(2) =  \{\rdk\}\rop (\wuk.K(1) + \wdk .\Kv(2))
\end{array}
$

\vspace{0.1cm}

\noindent
Given $b_1, b_2 \in \bool$, $k\in\kvalues$, we define $\vp(b_1, b_2, k)
= (\Bv_1(b_1) \,\|_{\emptyset}\, \Bv_2(b_2)) \,\|_{\emptyset}\, \Kv(k)$.
Processes $\proc_1$ and $\proc_2$ are represented by the following
PAFAS$_s$ processes: the actions $\req_i$ and $\cs_i$ indicate the request
to enter and the execution of the critical section by the process
$\proc_i$.

\hspace{3cm}
$\begin{array}{ll}
P_1 = \req_1.\wtb{1}.\wk{2}.P_{11} + \tau.P_{1} & 
P_2 = \req_2.\wtb{2}.\wk{1}.P_{21} + \tau.P_{2} \\
P_{11} = \rfb{2}.P_{13} + \rtb{2}.P_{12} &
P_{21} = \rfb{1}.P_{23} + \rtb{1}.P_{22}\\
P_{12} = \rk{2}.P_{11} + \rk{1}.P_{13} & 
P_{22} = \rk{1}.P_{21} + \rk{2}.P_{23}\\
P_{13} = \cs_1.\wfb{1}.P_{1} & 
P_{23} = \cs_2.\wfb{2}.P_{2}
\end{array}$

\noindent
Since no process should be forced to request by the fairness assumption,
$P_i$ has the alternative of an internal move, i.e.\ staying in its
non-critical section.
Peterson's algorithm is defined to be the PAFAS$_s$ process $\petersona
= ((P_1 \:\|_{\emptyset}\: P_2) \:\|_B\: \vp(\fa, \fa, 1))/B$; here (and in
the following) $B$ is the set of all actions except $\req_i$ and $\cs_i$
($i=1,2$). A MUTEX algorithm like Peterson's satisfies {\em liveness} if,
in every fair trace, each $\req_i$ is eventually followed by the respective
$\cs_i$.
\end{definition}

We now show how to modify the process \petersona\  such that it is live 
under the assumption of fairness of actions iff the modified process, that
we call $\petersona_{io}$, does not have catastrophic cycles. Observe that 
\fase\ only accepts request-response behaviours (having only $in$ and $out$
as visible actions) as input and, hence, it cannot be applied directly.
Moreover, \petersona\ can perform a 1-step followed by the two internal
actions of $P_1$ and $P_2$ (see Def.~\ref{pet-algo1}) giving a catastrophic
cycle which is not relevant for the liveness property. So, we modify
\petersona\ as follows: we first change $\req_1$ and $\cs_1$ into $\tau$'s
and $\req_2$ and $\cs_2$ into $in$ and $out$, resp.; we finally  delete the
$\tau$ summand of $P_2$. As in~\cite{CDV08tr} (see Theorem 8.2\footnote{The
proof of Theorem 8.2 we provide in~\cite{CDV08tr} is partly independent
from the specific algorithm we were analysing, i.e.\ Dekker's algorithm,
and it can be easily adapted to all the algorithms we consider in this
paper. From now on, we freely use the correspondence between liveness and
catastrophic cycles without explicitly proving it. In the following, if $P$
is a PAFAS$_s$ process that models a given MUTEX solution, we write
$P_{io}$ to denote the process we obtain by changing $P$ as \petersona.}),
we can prove that $\petersona_{io}$ does not have catastrophic cycles iff
each request from process $\proc_2$ will eventually be satisfied along fair
traces, i.e.\ iff \petersona\ is live for  process $\proc_2$ under the
assumption of fairness of actions. The liveness of \petersona\ follows by
the symmetry of the algorithm. In case of non-symmetric algorithms, as
e.g.\ Lamport, the liveness for processes $\proc_1$ and $\proc_2$  must be
proven separately. Since \fase\ has shown that \petersona$_{io}$ does not
have catastrophic cycles, our first result is:

\begin{proposition}\label{prop:pet-live}
$\petersona$ is live under the assumption of fairness of actions.
\end{proposition}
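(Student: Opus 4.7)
The plan is to reduce liveness under fairness of actions to a purely automatable property---absence of catastrophic cycles---on a modified process, and then appeal to \fase\ together with the symmetry of Peterson's algorithm.

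First, I would construct the auxiliary process $\petersona_{io}$ exactly as prescribed in the paragraph preceding the proposition: relabel $\req_1$ and $\cs_1$ to $\tau$ (making $\proc_1$'s protocol invisible to a testing environment), relabel $\req_2$ and $\cs_2$ to $in$ and $out$ respectively (so that $\proc_2$ becomes a request--response front end), and drop the $\tau$-summand of $P_2$ (so that $\proc_2$ is forced to request---this is the move that turns \emph{eventually critical} into \emph{eventually responds}). I would then check the two sanity conditions required for \fase: no $out$ without a preceding $in$, which holds because $\cs_2$ can be reached only after $\req_2$ along every trace of $P_2$; and that reading actions are used only in the syntactically restricted way demanded in Section~\ref{PAFAS-s}, which is immediate from Definition~\ref{pet-algo1} since each $\rop$ prefix guards only the encoding of a single variable.

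Second, I would invoke the adaptation of Theorem 8.2 of~\cite{CDV08tr}, which the authors explicitly transfer to any of the algorithms treated here. That theorem tells us: $\petersona_{io}$ has no catastrophic cycle (in the reduced transition system built by the construction of~\cite{CV05}) iff along every fair trace of $\petersona_{io}$ every $in$ is eventually followed by an $out$. Unwinding the relabelling, this latter condition is exactly the statement that, under fairness of actions, every $\req_2$ in $\petersona$ is eventually followed by $\cs_2$, i.e.\ liveness for $\proc_2$.

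Third, I would run \fase\ on $\petersona_{io}$; as reported in the excerpt, \fase\ confirms the absence of catastrophic cycles, so liveness for $\proc_2$ follows. Finally, since Peterson's algorithm is symmetric in $\proc_1$ and $\proc_2$ (the syntactic definition of $P_2$ is obtained from that of $P_1$ by swapping indices $1\leftrightarrow 2$, and $\vp(\fa,\fa,1)$ is, up to this index swap, indistinguishable from $\vp(\fa,\fa,2)$ modulo the arbitrariness of $k$'s initial value), the same argument with the roles of $\proc_1$ and $\proc_2$ interchanged gives liveness for $\proc_1$. Combining the two yields liveness of $\petersona$.

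The main obstacle I expect is not in the symmetry step or the \fase-call itself, but in making the reduction airtight: one must be sure that the $\tau$-summand of $P_1$ is harmless (it can be mimicked by a computation in $\petersona_{io}$ that does not hide a request, so that no fair trace is lost), and that the removal of the $\tau$-summand of $P_2$ does not introduce spurious catastrophic cycles compared to the original. Both points are handled by the proof template of Theorem 8.2 in~\cite{CDV08tr}, which the authors note is algorithm-independent, so the task reduces to checking that the specific constants (request/response actions, read-sets, hidden-action set $B$) match the template's hypotheses---a routine syntactic verification.
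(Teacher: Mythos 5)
Your proposal is correct and follows essentially the same route as the paper: construct $\petersona_{io}$ by the prescribed relabelling and deletion of the $\tau$-summand of $P_2$, invoke the adapted Theorem 8.2 of~\cite{CDV08tr} to equate absence of catastrophic cycles with liveness for $\proc_2$, let \fase\ certify that absence, and conclude for $\proc_1$ by symmetry. Your additional remarks on the sanity checks and on the initial value of $k$ in the symmetry step are reasonable elaborations of details the paper leaves implicit.
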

%%%%%%%%%%%%%%%%%%%%%%%%%%%%%%%%%
We now consider $\petersona'$, a slightly different specification of
Peterson where all actions -- including the reading of program variables --
are ordinary actions. E.g., in this version, we define $\Bv_i(\fa) =
\rfb{\it i}.\Bv_i(\fa) + \wfb{{\it i}}.\Bv_i(\fa) +
\wtb{{\it i}}. \Bv_i(\tr) $. Then,  $\petersona'$ can be defined as in
Def.~\ref{pet-algo1}.

\begin{proposition}\label{prop:pet-non-live} 
$\petersona'$  is not live under the assumption of fairness of actions.
\end{proposition}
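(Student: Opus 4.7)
The approach is to apply the adapted Theorem~8.2 of~\cite{CDV08tr}, reducing the goal to exhibiting a catastrophic cycle in the reduced transition system of $\petersona'_{io}$ (the modification of $\petersona'$ that relabels $\req_1,\cs_1$ to $\tau$, renames $\req_2,\cs_2$ to $in$ and $out$, and deletes the $\tau$-summand of $P_2$). Such a cycle induces a fair transition sequence of $\petersona'$ in which $\proc_2$ performs $\req_2$ but is never followed by $\cs_2$, witnessing non-liveness.

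I would first drive the system to the state
\[
S_0 \;=\; \bigl((\wk{2}.P_{11}\,\|_\emptyset\,P_{21})\,\|_B\,\vp(\tr,\tr,1)\bigr)/B
\]
via the action prefix $\tau\,(=\req_1),\ \wtb{1},\ in\,(=\req_2),\ \wtb{2},\ \wk{1}$, at which every subterm is still lazy, and then iterate the three-move block: (i) a 1-step, which makes every subterm urgent, in particular $\wk{2}$ in the $P_1$ branch and all actions of $\Bv_1(\tr)$ and $\Kv(1)$; (ii) the urgent synchronisation on $\rtb{1}$ between $P_2$ and $\Bv_1(\tr)$, sending $P_2$ to $P_{22}$ and, because reads are blocking, unfolding $\Bv_1$'s recursion so $\Bv_1$ is refreshed to lazy; (iii) the synchronisation on $\rk{1}$ between the now-lazy $P_2$ at $P_{22}$ and the still-urgent $\Kv(1)$, returning $P_2$ to $P_{21}$ and refreshing $\Kv(1)$ to lazy. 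From the second iteration onwards we sit in a stable configuration $S$, obtained from $S_0$ by letting $P_1$ and $\Bv_2$ be urgent; the block (i)--(iii) maps $S$ to itself. Throughout every iteration $P_2$ stays in $\{P_{21},P_{22}\}$, so no $out$ is performed; step~(i) contributes the required time step; hence the loop is catastrophic.

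The main obstacle is verifying that the 1-step of (i), when executed from $S$, satisfies the refusal rule \name{Par$_{t}$} with $X=\B$. The delicate action is $\wk{2}\in A=B$, so one must show $\wk{2}\in X_1\cup X_2$. On the $\vp$-side, $\Kv$ is lazy after step (iii), so by \name{Pref$_{t1}$} the variable block refuses $\wk{2}$ and $\wk{2}\in X_2$. On the $(P_1\,\|_\emptyset\,P_2)$-side, $P_1$ is urgent on $\wk{2}$ and cannot refuse it, but by \name{Pref$_{t2}$} it refuses every action in $\B\setminus\{\wk{2}\}$, while the lazy $P_{21}$ refuses everything; consequently the $\Bv_2$-urgent actions $\rtb{2},\wtb{2},\wfb{2}$, which have no sync partner while $P_1$ sits at $\wk{2}.P_{11}$, can still be placed in $X_1$. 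This is precisely the step that fails in $\petersona$ (cf.\ Prop.~\ref{prop:pet-live}): there the key variable is the read-set prefix $\{\ruk\}\rop(\wuk.\Kv(1)+\wdk.\Kv(2))$, and a read of $\rk{1}$ does \emph{not} unfold the continuation, so $\Kv$'s urgency on $\wk{2}$ survives across the read; the $\wk{2}$-sync then becomes non-refusable on both sides, $\wk{2}$ is forced to fire, $k$ becomes $2$, and $P_2$ at $P_{22}$ escapes to $P_{23}$ and performs $out$, breaking any would-be catastrophic loop.
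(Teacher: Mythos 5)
Your proposal is correct and takes essentially the same route as the paper: Proposition~\ref{prop:pet-non-live} is reduced (via the adapted Theorem~8.2 of~\cite{CDV08tr}) to exhibiting a catastrophic cycle of $\petersona'_{io}$, and the cycle you build exploits exactly the phenomenon the paper's Example~\ref{ex:pet-1} relies on, namely that an ordinary (blocking) read unfolds the variable's recursion and refreshes its laziness, so repeated reading by $\proc_2$ lets the variable delay $\proc_1$'s access forever. The only difference is the concrete witness --- the paper starves $P_1$ at $P_{12}$ waiting to \emph{read} $k$, while you starve it one step earlier at the \emph{write} $\wk{2}.P_{11}$ --- and your refusal-set check for the 1-step (that $\wk{2}$ is refused on the lazy $\Kv$-side and the urgent $\Bv_2$-actions on the $P_1\|_\emptyset P_2$-side) is precisely the verification needed, so both cycles are valid witnesses.
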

\begin{proof}
\fase\  shows that $\petersona'_{io}$ has  catastrophic cycles as,
e.g., those in the next examples.
\end{proof}

The following example shows a timed computation along which both processes
$\proc_1$ and $\proc_2$ get stuck after a request. To ease understanding,
we leave the actions on program variables visible, i.e.\ we consider a
timed computation of $P = (P_1 \,\|_{\emptyset}\, P_2)
\,\|_B\,\vp(\fa,\fa,1)$. Indeed, by the operational semantics, we know that
$P$ behaves as $\petersona'$ as long as we rename actions in $B$ with
$\tau$. We will proceed in this fashion later on in this section.
Furthermore, we write $\vp(\underline{\tr},\underline{\tr}, 1)$ and
$\vp(\tr,\underline{\tr}, 1)$ to abbreviate $(\underline{\Bv_1 (\tr)}
\,\|_{\emptyset} \, \underline{\Bv_2(\tr)} ) \,\|_{\emptyset} \, \Kv(1)$
and $(\Bv_1 (\tr) \,\|_{\emptyset} \,\underline{\Bv_2(\tr)} )
\,\|_{\emptyset} \, \Kv(1)$, resp. In general, we underline a value to
denote the urgent version of the PAFAS$_s$ process that represents the
corresponding variable. 
%%%%%%%%%%%%%%%%%%%%%%%%%%%%%%%%%%%%%%%%%%%%%%%%%
\begin{example}\rm\label{ex:pet-1}
Consider the following timed computation from $P$.

\begin{tracex}
$\begin{array}{l c c l l}
\ppath{}{P}{=}{(P_1 \,\|_{\emptyset}\, P_2) \,\|_B\, \vp(\fa,\fa,1)} 
{\nar{\req_1 \; \wtb{1} \;  \wk{2} \; \req_2  \; \wtb{2} \; \wk{1} }{} }
{}\\
\ppath{}{}{}{(P_{11} \,\|_{\emptyset}\, P_{21}) \,\|_B\, \vp(\tr,\tr,1)}{
\nar{\rtb{2} \; \rtb{1}}{} }{}\\
\ppath{}{R}{ = }{(P_{12} \,\|_{\emptyset}\, P_{22}) \,\|_B\,
\vp(\tr,\tr,1)}{ \sar{1} }{}\\
\ppath{}{\underline{R}}{=}{(\underline{P_{12}} \,\|_{\emptyset}\,
\underline{P_{22}}) \,\|_B\, \underline{\vp(\tr,\tr,1)}}{ \nar{\rk{1}}{}
}{}\\
\ppath{}{}{}{(\underline{P_{12}} \,\|_{\emptyset}\, P_{21})
\,\|_B\, \vp(\underline{\tr},\underline{\tr},1)}{ \nar{\rtb{1}}{} }{}\\
\ppath{}{Q}{ = }{(\underline{P_{12}} \,\|_{\emptyset}\, P_{22})
\,\|_B\, \vp(\tr,\underline{\tr},1)} { \sar{1} {}}  {\underline{R}}\\
\end{array}$
\end{tracex}

\noindent Process $R$ can only perform $\rk{1}$ as a synchronisation
between $\Kv(1)$ and either $P_{12}$ or $P_{22}$; after the first  1-step,
this action becomes urgent. Once in $\underline{R}$, we perform $\rk{1}$
and $\underline{\Kv}(1)$ evolves into $\Kv(1)$ which can delay  $\rk{1}$.
As a consequence, $Q$ can refuse to perform $\rk{1}$ and, since this is its
only activated action, $Q \nar{1}{} \underline{R}$. The execution sequence
$\petersona' = P/B \nar{\req_1 \,\tau^2 \, \req_2 \, \tau^4}{} R/B 
\nar{\tau^2 }{} R /B \ldots$ is fair but not live since no process will
ever enter the critical section; $\underline{R}/B \nar{\tau^2 }{} Q/B
\nar{1}{} \underline{R}/B$ corresponds to a catastrophic cycle in the
reduced transition system of $\petersona'_{io}$. 
\end{example}

This example describes a scenario where process $\proc_1$ will never move
because process $\proc_2$ repeatedly reads variables $k$ and $b_1$. There
is another fair run where $\proc_1$, reading variable $b_2$, can
repeatedly delay and, thus,  indefinitely block $\proc_2$ that wants to
write it. On the contrary, the representation of program variables we use
in Def.~\ref{pet-algo1} ensures the liveness of the hardware under the
assumption of fairness of actions; namely, it ensures that no process can
be indefinitely blocked by infinite reading.

\paragraph{Lamport's algorithm}
There are $n \geq 2$ processes and $n$ Boolean-valued variables $b_i$ ($i =
1 \ldots n$), each with initial value false;  only $\proc_i$ writes $b_i$,
but all the processes can read it. The $i$-th process is described below:

\vspace{0.15cm}
\small
\begin{algorithm}{Lamport}{}
{\bf var} \; j:integer;\\
\begin{WHILE}{$true$}
\langle \mbox{non-critical section} \rangle; \\
b_i \= true;\\
\begin{FOR}{j \= 1 \TO i-1}
\begin{IF}{b_j}
b_i \= false; \\
{\bf while} \; b_j \; {\bf do \; skip};\\
{\bf goto} \;  4;
\end{IF}
\end{FOR}\\
\begin{FOR}{j \= i+1 \TO n}
{\bf while} \; b_j \; {\bf do \; skip};
\end{FOR}\\
\langle \mbox{critical section} \rangle; \\
b_i \= false;
\end{WHILE}
\end{algorithm}
\normalsize
Now we provide the PAFAS$_s$ specification in case of $n=2$ processes.
\begin{definition}\label{lamp-algo1}\rm({\em Lamport's algorithm})
Again we first define the family of PAFAS$_s$ processes representing the
program variables. Let $\Bv_i(\fa) = \{\rfb{\it i}, \wfb{\it i}\}
\rop \wtb{\it i}. \Bv_i(\tr)$ and $\Bv_i(\tr) = \{\rtb{\it i}, \wtb{\it
i}\} \rop \wfb{\it i}. \Bv_i(\fa)$ where $i \in \{1,2\}$.
% 
% \hspace{3cm}
% $\begin{array}{l l}
% \Bv_i(\fa) = \{\rfb{\it i}, \wfb{\it i}\} \rop \wtb{\it i}.
% \Bv_i(\tr) & 
% \Bv_i(\tr) = \{\rtb{\it i}, \wtb{\it i}\} \rop \wfb{\it i}.
% \Bv_i(\fa) 
% \end{array}$
% 
% \vspace{0.1cm}
% \noindent 
We also define $\vp(b_1, b_2) = \Bv_1(b_1) \,\|_{\emptyset}\,
\Bv_2(b_2)$ where $b_1, b_2 \in \bool$. 

\noindent Processes $\proc_1$ and $\proc_2$ are represented by:

$\begin{array}{l l}
P_1 = \req_1.\wtb{1}.P_{11} + \tau.P_{1} \quad\quad & \quad\quad 
P_2 = \req_2.\wtb{2}.P_{21} + \tau.P_{2} \\
P_{11} =\rfb{2}.P_{12} + \rtb{2}.P_{11} \quad \quad & \quad \quad
P_{21} = \rfb{1}.P_{23} + \rtb{1}.\wfb{2}.P_{22}\\
P_{12} = \cs_1.\wfb{1}.P_{1}\quad \quad & \quad \quad 
P_{22} = \rfb{1}.\wtb{2}.P_{21} + \rtb{1}.P_{22}\\
& \quad \quad P_{23} = \cs_2.\wfb{2}.P_{2}
\end{array}$

\noindent
Finally $\lamporta = ((P_1 \:\|_\emptyset\: P_2) \:\|_B\:
\vp(\fa,\fa))/B$. 
\end{definition}
%%%%%%%%%%%%%%%%%%%%%%%%%%%%%%%%%
Note that now we regard as non-blocking not only the reading of a variable
but also its writing in case that the  written value equals the current
one. This kind of re-write does not change the state of the variable and 
can be thought of as a non-destructive or non-consuming operation (allowing
potential concurrent behaviour). This way of accessing a variable is not
new. It has been implemented e.g.\ in area of database. Unlike in $\peterson$'s specification, we make this assumption on the hardware to show that $\lamport$'s algorithm is not live with respect to $\proc_2$:
% ; see, for instance,
% the so-called {\em two-phase locking protocol} (further details can be
% found in~\cite{CDV08tr}).

\begin{proposition}\label{prop:lamp-live}
If we assume fairness of actions, \lamporta\ is live for process
$\proc_1$ but {\em not} for process $\proc_2$. 
\end{proposition}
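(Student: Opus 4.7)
The proof follows the same pattern as Proposition~\ref{prop:pet-live}, but because \lamporta\ is not symmetric, we must carry out two separate analyses, one for each process, using suitably modified request-response variants that can be fed into \fase.

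For liveness of $\proc_1$, I would build the process $\lamporta^{(1)}_{io}$ obtained from \lamporta\ by renaming $\req_1$ and $\cs_1$ to $in$ and $out$, respectively, renaming $\req_2$ and $\cs_2$ to $\tau$, and deleting the $\tau$-summand of $P_1$ (so that $\proc_1$ is forced to issue infinitely many requests along every fair trace, while $\proc_2$ remains free to stay in its non-critical section). By the adaptation of Theorem~8.2 of~\cite{CDV08tr} mentioned in the paper, $\lamporta$ is live for $\proc_1$ under fairness of actions iff $\lamporta^{(1)}_{io}$ has no catastrophic cycles. The intuitive reason this holds is that, once $\proc_1$ has set $b_1 = \tr$, process $\proc_2$ cannot both pass its test $b_1 = \tr$ in state $P_{21}$ and subsequently leave $b_2 = \tr$: on reading $\rtb{1}$, $\proc_2$ is forced to set $b_2 = \fa$ and wait; therefore $\proc_1$, which keeps reading $\rfb{2}/\rtb{2}$, will eventually (by fairness of the read synchronization and of the write $\wfb{2}$) observe $b_2 = \fa$ and enter its critical section. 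The plan is to let \fase\ verify the absence of catastrophic cycles in the reduced transition system of $\lamporta^{(1)}_{io}$.

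For non-liveness of $\proc_2$, I would symmetrically define $\lamporta^{(2)}_{io}$ by turning $\req_2, \cs_2$ into $in, out$, turning $\req_1, \cs_1$ into $\tau$, and removing the $\tau$-summand of $P_2$; then I would run \fase\ to exhibit a catastrophic cycle. Translating this cycle back, I would give, in the style of Example~\ref{ex:pet-1}, a concrete fair trace of $P = (P_1 \,\|_\emptyset\, P_2) \,\|_B\, \vp(\fa,\fa)$ from which $\proc_2$ never enters its critical section. The idea is to let $\proc_2$ perform $\req_2 \, \wtb{2}$, reaching $P_{21}$, and then to schedule $\proc_1$ so that it repeatedly performs the cycle $\req_1 \, \wtb{1} \, \rfb{2} \, \cs_1 \, \wfb{1}$. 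Each time $\proc_2$ reads $b_1$, it finds $\rtb{1}$ (because $\proc_1$'s execution is interleaved so that $b_1 = \tr$ precisely when $\proc_2$ reads), so $\proc_2$ goes to $\wfb{2}.P_{22}$, sets $b_2 = \fa$, and loops in $P_{22}$ until it eventually reads $\rfb{1}$ and resets $b_2$ to $\tr$ in $P_{21}$, and the scenario restarts. Thanks to the non-blocking reading of $b_1$ and the non-blocking re-writing $\wfb{2}$ on $\Bv_2(\fa)$, time can still progress infinitely along this schedule, so the resulting trace is fair; however $\cs_2$ is never performed. This corresponds directly to the catastrophic cycle returned by \fase.

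The main obstacle, as in Example~\ref{ex:pet-1}, will be the bookkeeping needed to write the counterexample trace explicitly and to verify that, at each step, the required time-step is legal under the refusal rules in Table~\ref{rt-semantics}. In particular, one must check that whenever a 1-step is claimed, every activated action of the composite process is either not urgent or can be refused by the chosen read-set, and that the interleaving between $\proc_1$'s write/read actions and $\proc_2$'s reads of $b_1$ keeps $\proc_2$ forever trapped between $P_{21}$ and $P_{22}$. Once the cycle produced by \fase\ is unfolded in this way, the non-liveness of $\proc_2$ follows, completing the proof.
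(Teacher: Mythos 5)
Your proposal follows essentially the same route as the paper: Proposition~\ref{prop:lamp-live} is proved there exactly by the two asymmetric $io$-modifications you describe (rename one pair $\req_i,\cs_i$ to $in,out$, the other to $\tau$, delete the $\tau$-summand of the process under test) and by letting \fase\ report a catastrophic cycle for the $\proc_2$-variant and none for the $\proc_1$-variant. One caveat on the supplementary counterexample you sketch: it does not match the cycle the tool actually produces (Example~\ref{ex:lamp-1}). In the paper's trace $\proc_2$ reaches $P_{22}$ once, sets $b_2=\fa$, and then never moves again; the repeated 1-steps are justified solely because $\proc_1$ keeps toggling $b_1$, so the variable always offers a \emph{fresh} instance of $\rfb{1}/\rtb{1}$ to $\underline{P_{22}}$ and $\proc_2$'s only enabled synchronisation is never continuously urgent. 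Your scenario instead keeps $\proc_2$ shuttling between $P_{21}$ and $P_{22}$ and toggling $b_2$, which forces $\proc_1$ to wait for windows in which $b_2=\fa$ and opens windows in which $\rfb{1}.P_{23}$ is enabled at $P_{21}$; the fairness of that tighter interleaving (legality of every claimed 1-step) is considerably harder to justify and is not what the detected catastrophic cycle encodes. Also, the non-blocking re-write is $\wfb{2}$ on $\Bv_2(\fa)$, whereas the $\wfb{2}$ occurring in your schedule is the ordinary, state-changing write on $\Bv_2(\tr)$. Since the proof proper rests on the \fase\ checks rather than on the hand-built trace, this does not invalidate your argument, but the illustrative run should be replaced by (or re-derived from) the actual cycle.
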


\begin{proof}
\lamporta\ is not live for $\proc_2$ because $\lamporta_{io}$ has
catastrophic cycles. To prove the other statement, we need symmetric
changes; namely, we rename actions $\req_1$ and $\cs_1$ into $in$ and $out$
resp. and actions $\req_2$ and $\cs_2$ into $\tau$; we also delete the
$\tau$-summand of process $\proc_1$. Since this modified process does not
have catastrophic cycles, we conclude that \lamporta\ is live for process
$\proc_1$.
\end{proof}

Prop.~\ref{prop:lamp-live} still holds if we use the same representation of
program variables as in Def.~\ref{pet-algo1}, while we lose liveness for
$\proc_1$ whenever processes representing program variables are those used
for $\petersona'$. Then, while reading variable $b_1$, process $\proc_2$
can forever block the other process that wants to write it. The next
example explains why \lamporta\ is not live for process $\proc_2$.
%%%%%%%%%%%%%%%%%%%%%%%%%%%%%%%%%%%%%%%%%%%%%%%%%%%%%%%%%%
\begin{example}\rm\label{ex:lamp-1}
The following timed computation corresponds to an execution sequence from
$\lamporta = L/B$ which is fair but not live since process $\proc_2$ never
enters its critical section.

\begin{tracex}
$\begin{array}{l c c l l}
\ppath{}{L}{=}{(P_1 \,\|_{\emptyset}\, P_2) \,\|_B\, \vp(\fa,\fa)} 
{\nar{\req_1 \; \req_2  }{} } {}\\
\ppath{}{}{}{(\wtb{1}.P_{11} \,\|_{\emptyset}\, \wtb{2}.P_{21}) \,\|_B\,
\vp(\fa,\fa)} {\nar{\wtb{1} \; \rfb{2} }{} } {}\\
\ppath{}{}{}{(P_{12} \,\|_{\emptyset}\, \wtb{2}.P_{21}) \,\|_B\,
\vp(\tr,\fa)} {\nar{\wtb{2} \; \rtb{1} \; \wfb{2} }{} } {}\\
\ppath{}{R}{=}{(P_{12} \,\|_{\emptyset}\, P_{22}) \,\|_B\, \vp(\tr,\fa)}
{\nar{ 1 }{} } {}\\
\ppath{}{\underline{R}}{=}{(\underline{P_{12}} \,\|_{\emptyset}\,
\underline{P_{22}}) \,\|_B\, \underline{\vp(\tr,\fa)}} {\nar{ \cs_1 \;
\wfb{1} }{} } {}\\
\ppath{}{}{}{(P_{1} \,\|_{\emptyset}\, \underline{P_{22}}) \,\|_B\,
\vp(\fa,\underline{\fa})} {\nar{ \req_1 \; \wtb{1} \; \rfb{2} }{} } {}\\
\ppath{}{Q}{=}{(P_{12} \,\|_{\emptyset}\, \underline{P_{22}}) \,\|_B\,
\vp(\tr,\underline{\fa})} {\nar{ 1 }{} } {} \underline{R}\\
\end{array}$
\end{tracex} 

\noindent $R$ can do either a $\cs_1$- or a $\rtb{1}$-action (due to
a synchronisation between $P_{22}$ and $\Bv_1(\tr)$); both actions become
urgent after the first 1-step. Later, we perform  $\cs_1$ followed by
$\wfb{1}$ (and, hence, $\underline{\Bv_1(\tr)}$ evolves into $\Bv_1(\fa)$)
which, in turn, is followed by $\req_1$ and $\wtb{1}$. At this stage,
$\Bv_1(\fa)$ becomes $\Bv_1(\tr)$ and $Q$ can refuse to perform its
activated actions, again $\cs_1$ and $\rtb{1}$, and evolve into
$\underline{R}$.  Finally, $\underline{R}/B \nar{ \cs_1 \, \tau \, \req_1
\, \tau^2}{} Q/B \nar{1}{} \underline{R}/B$ corresponds to a catastrophic
cycle in the reduced transition system of $\lamporta_{io}$.
A key observation here is that process $\proc_1$, along this cycle,
continuously changes the value of $b_1$ from {\em true} to {\em false} and
vice versa. Consequently, the PAFAS$_s$ process representing this
variable always offers a new instance of $\rfb{1}$ and $\rtb{1}$ to its
synchronisation partners, and in particular to $\underline{P_{22}}$. So,
any possible move of process $\proc_2$ can be arbitrarily delayed (and,
hence, this process can indefinitely be blocked) even in fair traces. No
reasonable assumption about program variables can prevent this unwanted
behaviour under weak fairness.
\end{example}

\vspace{-0.5cm}
\paragraph{Dijkstra's algorithm}
%%%%%%%%%%%%%%%%%%%%%%%%%%%%%%%%%%%%%%%%%%%
This algorithm considers $n \geq 2$ processes that share two Boolean-valued
arrays $b$ and $c$ (whose components are initialised to true) and a turn
variable $k$ initially chosen in $\{1,2,\ldots, n\}$. The $i$-th process is described below:
\begin{algorithm}{Dijkstra}{}
{\bf var} \; j:integer;\\
\begin{WHILE}{$true$}
\langle \mbox{non-critical section} \rangle; \\
b[i] \= false;\\
\begin{IF}{k \neq i}
c[i] \= true;\\
{\bf if} \; b[k] \; {\bf then} \; k \= i; \\
{\bf goto} \; 5;
\ELSE 
c[i] \= false;\\
\begin{FOR}{j \= 1 \TO n}
{\bf if} \; j \neq i  \mbox{ and } \neg c[j] \;{\bf then \; goto} \; 5;
\end{FOR}
\end{IF}\\
\langle \mbox{critical section} \rangle; \\
c[i] \= true; \\
b[i] \= true;
\end{WHILE}
\end{algorithm}
\noindent Again we provide the PAFAS$_s$ representation in case of $n=2$
processes.
%  but, to faithfully model the assumption that any read or write
% of a program variable is an atomic action, we must use a more complicated
% representation than in the previous cases. 
%%%%%%%%%%%%%%%%%%%%%%%%%%
\begin{definition}\rm\label{def:dijkstra1} ({\it Dijkstra's algorithm}) 
Components of the array $b$ are represented by processes $\Bv_i(\fa)$
and $\Bv_i(\tr)$ ($i=1,2$) in Def.~\ref{lamp-algo1}. The other shared
variables are defined similarly.
% \hspace{2cm}
% $
% \begin{array}{l l}
% \Cv_i(\fa) = \{\rfc{\it i}, \wfc{\it i}\} \rop \wtc{\it i}. \Cv_i(\tr)
%  & \Kv(1) =  \{\ruk, \wuk\} \rop \wdk .\Kv(2)\\
% \Cv_i(\tr) = \{\rtc{\it i}, \wtc{\it i}\} \rop \wfc{\it i}. \Cv_i(\fa)
%  & \Kv(2) =  \{\rdk, \wdk\} \rop \wuk.K(1) \\
% \end{array}$
% 
% \vspace{0.05cm}
Let $i=1,2$, $b_i, c_i \in \bool$, and $k \in \kvalues$; as usual, 
$\vp(b_1, b_2, c_1, c_2, k)$ denotes the parallel composition of all
program variables. Its definition is as expected and, hence, omitted.
Processes $\proc_1$ and $\proc_2$ are instead given below:

\hspace{1cm}
$\begin{array}{l l}
P_{1} = \req_1.\wfb{1}.P_{11}  + \tau.P_1 & 
P_{2} = \req_2.\wfb{2}.P_{21} + \tau.P_2 \\
P_{11} = \rk{1}.P_{15} + \rk{2}.\wtc{1}.P_{12} & 
P_{21} = \rk{2}.P_{25} + \rk{1}.\wtc{2}.P_{22} \\
P_{12} = get.(\rk{1}.P_{13} + \rk{2}.P_{14}) &
P_{22} = get.(\rk{2}.P_{23} + \rk{1}.P_{24})\\
P_{13} = \rtb{1}.put.\wk{1}.P_{11} + \rfb{1}.put.P_{11}\qquad &
P_{23} = \rtb{2}.put.\wk{2}.P_{21} + \rfb{2}.put.P_{21}\\
P_{14} = \rtb{2}.put.\wk{1}.P_{11} + \rfb{2}.put.P_{11} &
P_{24} = \rtb{1}.put.\wk{2}.P_{21} + \rfb{1}.put.P_{21}\\
P_{15}= \wfc{1}.(\rfc{2}.P_{11} + \rtc{2}.P_{16}) &
P_{25}= \wfc{2}.(\rfc{1}.P_{21} + \rtc{1}.P_{26}) \\
P_{16}= \cs_1.\wtc{1}.\wtb{1}.P_1 &
P_{26}= \cs_2.\wtc{2}.\wtb{2}.P_2 
\end{array}$

\vspace{0.1cm}

\noindent Dijkstra's algorithm is defined as $\dijkstraa =
(((P_1 \:\|_\emptyset\: P_2) \:\|_{\{get, put\}} \BK\: ) \:\|_B\: \vp(\tr,
\tr, \tr, \tr, 1))/(B \cup \{get,put\})$ where $\BK= get.put.\BK$.

As in~\cite{Walker89} we must ensure that whenever, during the execution of
the statement ``{\bf if $b[k]$ then $k \leftarrow i$}'', process $\proc_i$
has read variable $k$ but not yet $b[k]$, the other process cannot change
the value of the former variable. Note that $\BK$ locks the variable $k$ in
writing mode when evaluating $b[k]$. Indeed, after a $get$-action, $k$ can
be written only after a subsequent $put$-action, i.e.\ once $b[k]$ has been
read.
\end{definition}

As other papers in the literature (see e.g.~\cite{Bogunovic03}), we cannot
prove the liveness of the algorithm\footnote{Paper~\cite{Bogunovic03}
studies the liveness of the same algorithms we consider here except for
Lamport. In~\cite{Bogunovic03} it has been proven that Peterson and Knuth
are live, but Dijkstra is not.}. 
In case $k$ is  $1$, process $\proc_1$ can immediately enter its critical
section (after setting $b[1]$ to false, both conditions $k \neq 1$ and
$\neg c[2]$ are false), while process $\proc_2$ has to wait until $b[1]$
becomes true (i.e.\ until $\proc_1$ ends its critical section) and, hence,
it can change $k$. If $\proc_1$ is fast enough to perform its critical
section, reset variables $c[1]$ and $b[1]$, and submit a further request
(again, by setting $b[1]$ to false) before $\proc_2$ can actually read
$b[1]$, the latter process can never enter its critical section. This
scenario is fair and, hence, admissible; see e.g.\ in~\cite{Bogunovic03}
where Dijkstra is analysed by exploiting the model checker SMV
(\cite{Bogunovic03} and references therein). The fairness notion assumed
in~\cite{Bogunovic03} ensures that each process executes infinitely often
and that no process can stay in its critical or non-critical section
forever. The next example shows that the above scenario is also admissible
if one assumes fairness of actions and introduces reasonable non-blocking
behaviours.
\begin{example}\rm\label{ex:dijkstra1}
Let us consider the following timed computation:

\begin{tracex}
$\begin{array}{l l l l l}
\ppath{D}{=}{((P_1 \,\|_\emptyset \, P_2) \,\|_{\{get, put\}}\, \BK)
\,\|_B\, \vp(\tr, \tr, \tr, \tr,1)}{\nar{ \req_1 \; \wfb{1} \; \rk{1}
}{}}{}\\
\ppath{}{}{((P_{15} \,\|_\emptyset\, P_2) \,\|_{\{get, put\}}\, \BK)
\,\|_B\, \vp(\fa, \tr, \tr, \tr,1)}{\nar{ \wfc{1} \; \rtc{2} }{}}{}\\
\ppath{}{}{((P_{16} \,\|_\emptyset\, P_2) \,\|_{\{get, put\}}\, \BK)
\,\|_B\, \vp(\fa, \tr, \fa, \tr,1)}{\nar{ \req_2 \; \wfb{2} }{}}{}\\
\ppath{}{}{((P_{16} \,\|_\emptyset\, P_{21}) \,\|_{\{get, put\}}\, \BK)
\,\|_B\, \vp(\fa, \fa, \fa, \tr,1)}{\nar{ \rk{1} \; \wtc{2} }{}}{}\\
\ppath{}{}{((P_{16} \,\|_\emptyset\, P_{22}) \,\|_{\{get, put\}}\, \BK)
\,\|_B\, \vp(\fa, \fa, \fa, \tr,1)}{\nar{ get \; \rk{1} }{}}{}\\
\ppath{R}{=}{((P_{16} \,\|_\emptyset\, P_{24}) \,\|_{\{get, put\}}\,
put.\BK) \,\|_B\, \vp(\fa, \fa, \fa, \tr,1)}{\nar{ 1 }{}}{}\\
\ppath{\underline{R}}{=}{((\underline{P_{16}} \,\|_\emptyset\,
\underline{P_{24}}) \,\|_{\{get, put\}}\, \underline{put}.\BK) \,\|_B\,
\underline{\vp(\fa, \fa, \fa, \tr,1)}}{\nar{ \cs_1 \; \wtc{1} \; \wtb{1}
}{}}{}\\
\ppath{}{}{((P_{1} \,\|_\emptyset\, \underline{P_{24}}) \,\|_{\{get,
put\}}\, \underline{put}.\BK) \,\|_B\, \vp(\tr, \underline{\fa}, \tr,
\underline{\tr}, \underline{1})}{\nar{ \req_1 \; \wfb{1} \; \rk{1}}{}}{}\\
\ppath{}{}{((P_{15} \,\|_\emptyset\, \underline{P_{24}}) \,\|_{\{get,
put\}}\, \underline{put}.\BK) \,\|_B\, \vp(\fa, \underline{\fa}, \tr,
\underline{\tr}, \underline{1})}{\nar{ \wfc{1} \; \rtc{2} } {}}{}\\
\ppath{Q}{=}{((P_{16} \,\|_\emptyset\, \underline{P_{24}}) \,\|_{\{get,
put\}}\, \underline{put}.\BK) \,\|_B\, \vp(\fa, \underline{\fa}, \fa,
\underline{\tr}, \underline{1})}{\nar{ 1 } {}}{ \underline{R}}\\
\end{array}$
\end{tracex}

Along the cycle $\underline{R}/B \nar{ \cs_1 \, \tau^2 \, \req_1 \,
\tau^4}{} Q/B \nar{1}{} \underline{R}/B$, the process $\proc_1$ repeatedly
changes the value of $b_1$  from {\em false} to {\em true} and vice versa.
As in Example~\ref{ex:lamp-1}, this means that it can block forever process
$\proc_2$.
\end{example}

\begin{proposition}\label{prop:dijkstra} 
\dijkstraa\ is not live under the assumption of fairness of actions.
\end{proposition}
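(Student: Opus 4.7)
The plan is to proceed exactly as in Propositions~\ref{prop:pet-non-live} and~\ref{prop:lamp-live}. First I build $\dijkstraa_{io}$ from $\dijkstraa$ by renaming $\req_1$ and $\cs_1$ into $\tau$, renaming $\req_2$ and $\cs_2$ into $in$ and $out$, and deleting the $\tau$-summand of $P_2$ so that $\proc_2$ is forced to issue a request. By the correspondence between liveness and catastrophic cycles, an instance of Theorem 8.2 of~\cite{CDV08tr} whose applicability to all the MUTEX algorithms considered here is announced in the footnote to Proposition~\ref{prop:pet-live}, $\dijkstraa$ is live for $\proc_2$ under fairness of actions iff the reduced transition system of $\dijkstraa_{io}$ has no catastrophic cycles. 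Since liveness of the whole algorithm requires liveness for \emph{both} competing processes, exhibiting a single catastrophic cycle suffices to conclude non-liveness of $\dijkstraa$.

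Next I use Example~\ref{ex:dijkstra1} as the explicit witness. Under the renaming above, the displayed cycle $\underline{R}/B \nar{\cs_1\,\tau^2\,\req_1\,\tau^4}{} Q/B \nar{1}{} \underline{R}/B$ becomes a cycle whose only labels are $\tau$-transitions and a single 1-step, so neither $in$ nor $out$ occurs along it. Reachability of $\underline{R}$ from the initial state $\dijkstraa_{io}$ is witnessed by the prefix trace from $D$ to $R$ in the same example, followed by $R \nar{1}{} \underline{R}$; and the reduction of~\cite{CV05} only identifies states carrying equivalent refusal information, so a catastrophic cycle present in the full transition system is preserved in the reduced one.

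The hard part is to verify that $Q \nar{1}{} \underline{R}$ is indeed a full time step, i.e.\ that $Q \nar{\B}{r} \underline{R}$: via the rules of Table~\ref{rt-semantics}, one must check that in $Q$ no synchronisation among $\underline{P_{16}}$, $\underline{P_{24}}$, the locked buffer $\underline{put}.\BK$, and the wrapped program variables $\vp(\fa, \underline{\fa}, \fa, \underline{\tr}, \underline{1})$ is urgent, so that every activated action can be refused. This is a routine but tedious case analysis that the tool \fase\ carries out automatically; it is precisely the automated detection of this catastrophic cycle that establishes the proposition.
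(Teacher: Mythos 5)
Your proposal matches the paper's own argument: the paper establishes Proposition~\ref{prop:dijkstra} precisely by exhibiting, in Example~\ref{ex:dijkstra1}, the fair but non-live computation whose cycle $\underline{R}/B \nar{\cs_1\,\tau^2\,\req_1\,\tau^4}{} Q/B \nar{1}{} \underline{R}/B$ becomes a catastrophic cycle of $\dijkstraa_{io}$ detected by \fase, invoking the liveness/catastrophic-cycle correspondence announced in the footnote to Prop.~\ref{prop:pet-live}. Your additional remarks on the $in$/$out$ renaming, reachability of $\underline{R}$, and checking the full time-step at $Q$ are just explicit elaborations of steps the paper leaves implicit, not a different route.
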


\paragraph{Knuth's algorithm}
%%%%%%%%%%%%%%%%%%%%%%%%%%%%%%%%%%%%%%%%%%%
There are two processes $\proc_1$ and $\proc_2$, two variables $c_1$ and
$c_2$ that take values in $\{0, 1, 2\}$ and whose initial value is 0, and a
turn variable $k$ that takes values in $\{1, 2\}$
and whose initial value is arbitrary.  Process $P_i$ ($i=1,2$) is
described as follows, where $j$ is the index of the other process:
\begin{algorithm}{Knuth}{}
\begin{WHILE}{$true$}
\langle \mbox{non-critical section} \rangle; \\
c_i \= 1;\\
{\bf if} \; k = i  \; {\bf then \; goto} \; 6;\\
\; {\bf if} \; c_j  \neq 0  \; {\bf then \; goto} \; 4;\\
c_i \= 2;\\
{\bf if} \; c_j =2  \; {\bf then \; goto} \; 3;\\
k \= i;\\
\langle \mbox{critical section} \rangle; \\
k \= j; \\
c_i \= 0;
\end{WHILE}
\end{algorithm}
%%%%%%%%%%%%%%%%%%%%%%%%%%%%%%%%%%%%%%%%%%%%%%%
\begin{definition}\rm\label{def:knuth1} ({\it $\knuth$'s algorithm}) 
The turn variable $k$ is given in Def.~\ref{pet-algo1} and modelled according to Def.~\ref{def:dijkstra1}. Variables $c_1$
and $c_2$ are represented as follows, where $i = 1, 2$:

\vspace{0.1cm}
\hspace{3.5cm}
$\begin{array}{l c l}
\Cv_i(0) & = & \{\rc{i}0, \wc{i}0\} \rop (\wc{i}1.\Cv_i(1) +
\wc{i}2.\Cv_i(2)) \\
\Cv_i(1) & = & \{\rc{i}1, \wc{i}1\} \rop (\wc{i}0.\Cv_i(0) +
\wc{i}2.\Cv_i(2))\\
\Cv_i(2) & = & \{\rc{i}2, \wc{i}2\} \rop (\wc{i}0.\Cv_i(0) +
\wc{i}1.\Cv_i(1))
\end{array}$

\vspace{0.15cm}

\noindent Let  $c_1, c_2 \in \{0,1,2\}$  and $k \in \kvalues$. We
let $\vp( c_1, c_2, k)$ to be the parallel composition of all program
variables. Moreover, processes $\proc_1$ and $\proc_2$ are defined as
follows: 

\vspace{0.1cm}

\hspace{1.5cm}
$\begin{array}{l l}
P_{1} = \req_1.\wc{1}1.P_{11} + \tau.P_{1} \quad\quad &
P_{2} = \req_2.\wc{2}1.P_{21} + \tau.P_{2} \\
P_{11} = \rk{1}.P_{13} + \rk{2}.P_{12} &
P_{21} = \rk{2}.P_{23} + \rk{1}.P_{22} \\
P_{12} = \rc{2}0.P_{13} + \rc{2}1.P_{11} + \rc{2}2.P_{11} &
P_{22} = \rc{1}0.P_{23} + \rc{1}1.P_{21} + \rc{1}2.P_{21} \\
P_{13} = \wc{1}2.P_{14} &
P_{23} = \wc{2}2.P_{24} \\
% % P_{14} = \rc{2}0.P_{15} + \rc{2}1.P_{15} + \rc{2}2.P_{16} &
% % P_{24} = \rc{1}0.P_{25} + \rc{1}1.P_{25} + \rc{1}2.P_{26} \\
% P_{15} = \wk{1}.\cs_1.\wk{2}.\wc{1}0.P_1 &
% P_{25} = \wk{2}.\cs_2.\wk{1}.\wc{2}0.P_2 \\
% P_{16} = \wc{1}1.P_{11} &
% P_{26} = \wc{2}1.P_{21} 
\end{array}$

\hspace{1.5cm}
$\begin{array}{l l}
% P_{1} = \req_1.\wc{1}1.P_{11} + \tau.P_{1} \quad\quad &
% P_{2} = \req_2.\wc{2}1.P_{21} + \tau.P_{2} \\
% P_{11} = \rk{1}.P_{13} + \rk{2}.P_{12} &
% P_{21} = \rk{2}.P_{23} + \rk{1}.P_{22} \\
% P_{12} = \rc{2}0.P_{13} + \rc{2}1.P_{11} + \rc{2}2.P_{11} &
% P_{22} = \rc{1}0.P_{23} + \rc{1}1.P_{21} + \rc{1}2.P_{21} \\
% P_{13} = \wc{1}2.P_{14} &
% P_{23} = \wc{2}2.P_{24} \\
P_{14} = \rc{2}0.P_{15} + \rc{2}1.P_{15} + \rc{2}2.P_{16} &
P_{24} = \rc{1}0.P_{25} + \rc{1}1.P_{25} + \rc{1}2.P_{26} \\
P_{15} = \wk{1}.\cs_1.\wk{2}.\wc{1}0.P_1 &
P_{25} = \wk{2}.\cs_2.\wk{1}.\wc{2}0.P_2 \\
P_{16} = \wc{1}1.P_{11} &
P_{26} = \wc{2}1.P_{21} 
\end{array}$

\noindent We define $\knuthb = ((P_1 \:\|_\emptyset \: P_2) \:\|_B\: \vp(0,
0, 1))/B$. 
\end{definition}
%%%%%%%%%%%%%%%%%%%%%%%%%%%%%%%%%%%%%%%%%%%%%%%%
We now provide an example that shows the existence of a catastrophic cycle
in the reduced transition system of the modified \knutha. This example also
implies Prop.~\ref{prop:knuth}.

\begin{example}\rm\label{ex:knuth}
Let us consider the following timed computation:

$\begin{array}{l l l l l}
\ppath{K}{=}{(P_1 \,\|_\emptyset\, P_2)  \,\|_B\, \vp(0,0,1)}{%
\nar{ \req_2 \, \wc{2}1 }{}}{}\\
\ppath{}{}{(P_1 \,\|_\emptyset\, P_{21})  \,\|_B\, \vp(0,1,1)}{%
\nar{ \rk{1} \, \rc{1}0 \, \wc{2}2 }{}}{}\\
\ppath{}{}{(P_1 \,\|_\emptyset\, P_{24})  \,\|_B\, \vp(0,2,1)}{%
\nar{ \req_1 \, \wc{1}1 }{}}{}\\
\ppath{}{}{(P_{11} \,\|_\emptyset\, P_{24})  \,\|_B\, \vp(1,2,1)}{%
\nar{ \rk{1} \, \wc{1}2 }{}}{}\\

\ppath{R}{=}{(P_{14} \,\|_\emptyset\, P_{24})  \,\|_B\, \vp(2,2,1)}{%
\nar{ 1 }{}}{}\\
\ppath{\underline{R}}{=}{(\underline{P_{14}} \,\|_\emptyset\,
\underline{P_{24}}) 
\,\|_B\, \underline{\vp(2,2,1)}}{\nar{ \rc{2}2 }{}}{}\\
\ppath{}{}{(P_{16} \,\|_\emptyset\, \underline{P_{24}}) 
\,\|_B\, \underline{\vp(2,2,1)}}{\nar{ \wc{1}1 \; \rk{1} \; \wc{1}2
}{}}{}\\
%
% \ppath{}{}{(P_{11} \,\|\, \underline{P}_{24}) 
% \,\|_B\, \vp(1,\underline{2},\underline{1})}{\nar{ \rk{1}  }{}}{}\\
% %
% \ppath{}{}{(P_{13} \,\|\, \underline{P}_{24}) 
% \,\|_B\, \vp(1,\underline{2},\underline{1})}{\nar{ \wc{1}2 }{}}{}\\
% %
\ppath{Q}{=}{(P_{14} \,\|_\emptyset\, \underline{P_{24}}) 
\,\|_B\, \vp(2,\underline{2},\underline{1})}{\nar{ 1 }{}}{\underline{R}} 
\end{array}$

\vspace{0.15cm}

\noindent Once in $R$, process $\proc_1$ cannot enter its critical section
because $c_2$ is 2; but, the value of this variable will
never change because $\proc_2$ is blocked. Moreover, as in
Examples~\ref{ex:lamp-1} and~\ref{ex:dijkstra1}, repeated changes of
variable $c_1$ (from 2 to 1 and vice versa) allows a further 1-step in
$Q$.
The execution sequence $\knutha = K/B \nar{\req_2 \, \tau^4 \, \req_2 \,
\tau^3}{} R/B  \nar{ \tau^4 }{} R /B \ldots$ is fair but not live since
process $\proc_2$ never enters its critical section. Let us finally notice
that Knuth is live e.g.\ in~\cite{Bogunovic03} since the above execution
sequence is not fair as defined there, and hence not admissible, because
process $\proc_2$ does not execute infinitely often. 
\end{example}
\begin{proposition}\label{prop:knuth}
\knutha\ is not live under the assumption of fairness of actions.
\end{proposition}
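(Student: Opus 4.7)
The plan is to mirror the proof strategy that was used for \dijkstraa\ in Prop.~\ref{prop:dijkstra}: invoke the correspondence (adapted from Theorem 8.2 in~\cite{CDV08tr}) between liveness under fairness of actions and the absence of catastrophic cycles in the reduced transition system of a suitably modified, request-response version $\knutha_{io}$. It then suffices to exhibit one catastrophic cycle in $\knutha_{io}$ to conclude that \knutha\ is not live for the process whose $\req$/$\cs$ actions have been kept visible.

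First, I would form $\knutha_{io}$ from $\knutha$ as described just before Prop.~\ref{prop:pet-live}: rename $\req_2$ and $\cs_2$ into $in$ and $out$, rename $\req_1$ and $\cs_1$ into $\tau$, and delete the $\tau$-summand of $P_1$ so that $\proc_1$ cannot trivially stay in its non-critical section. This deletion is important because it rules out the spurious catastrophic cycle obtained by letting both processes idle forever in their non-critical sections, which is not relevant for the liveness property. Because Knuth is symmetric in $\proc_1$ and $\proc_2$, it is enough to falsify liveness for $\proc_2$; the chosen renaming makes \fase\ applicable to $\knutha_{io}$.

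Next, I would appeal to Example~\ref{ex:knuth}, which I would use as the witness for the existence of a catastrophic cycle. The computation displayed there reaches the state $\underline{R} = (\underline{P_{14}} \,\|_\emptyset\, \underline{P_{24}}) \,\|_B\, \underline{\vp(2,2,1)}$, performs a sequence of actions internal to $\proc_1$ together with read/write operations on $c_1$ and $k$ (all of which become $\tau$-transitions in $\knutha_{io}$), and returns to a state $Q$ from which a 1-step leads back to $\underline{R}$. I would verify explicitly that along this loop process $\proc_2$ is stuck in $\underline{P_{24}}$ — it is waiting for $c_1$ to leave the value $2$, yet the non-blocking reading and re-writing behaviour of $\Cv_1$ lets $\proc_1$ repeatedly rewrite the very value $2$ (via the $\wc{1}2$ transitions in the cycle), so the required state change of $c_1$ never becomes urgent enough to force synchronisation with $P_{24}$.

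The heart of the argument, and the step I expect to be the main obstacle, is checking that the displayed loop is genuinely a catastrophic cycle in the \emph{reduced} transition system of $\knutha_{io}$, i.e.\ that it contains at least one 1-step (which it does, as indicated) and no $in$- or $out$-transition after the renaming (none of the actions in the loop are $\req_2$ or $\cs_2$), and, crucially, that $Q$ may really perform $Q \nar{1}{} \underline{R}$: one has to confirm, via rule \name{Read$_t$} for the variable processes and the parallel/refusal rules of Table~\ref{rt-semantics}, that in $Q$ all activated actions can be refused — exactly the phenomenon already highlighted in Example~\ref{ex:lamp-1}, where rewriting a variable to its current value resets the urgency of its read actions. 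Once these refusals are verified, the cycle is catastrophic and the corresponding execution of \knutha\ is, by the definition of fair traces as sequences of visible actions in transition sequences with infinitely many 1-steps, a fair trace that never contains $\cs_2$ after the initial $\req_2$. This establishes the proposition.
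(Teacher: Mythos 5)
Your overall strategy is the paper's own: Proposition~\ref{prop:knuth} is proved there precisely by exhibiting the cycle of Example~\ref{ex:knuth} as a catastrophic cycle of the modified process and appealing to the adapted Theorem~8.2 correspondence. However, your construction of $\knutha_{io}$ is wrong in a way that breaks that appeal. The paper (consistently, for $\petersona$ and for the $\proc_1$-liveness check of $\lamporta$) deletes the $\tau$-summand of the process whose $\req$/$\cs$ actions are renamed to $in$/$out$ --- here $P_2$ --- and \emph{keeps} the $\tau$-summand of the other process, since $\proc_1$ idling in its non-critical section is legitimate behaviour against which $\proc_2$'s liveness must hold. You do the opposite: you delete the $\tau$-summand of $P_1$ and leave $\tau.P_2$ in place. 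With your modification $\knutha_{io}$ has the trivial catastrophic cycle in which $P_2$ forever takes its internal summand after a 1-step --- exactly the spurious cycle the modification is designed to eliminate --- so ``$\knutha_{io}$ has a catastrophic cycle'' no longer implies non-liveness, and the reduction you invoke does not go through as stated.

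There is also a misreading of why the displayed loop works. $P_{24} = \rc{1}0.P_{25} + \rc{1}1.P_{25} + \rc{1}2.P_{26}$ can proceed on \emph{any} value of $c_1$, so $\proc_2$ is not ``waiting for $c_1$ to leave the value $2$''; and the writes occurring in the cycle, $\wc{1}1$ followed by $\wc{1}2$, are ordinary state-changing writes (taking $\Cv_1(2)$ to $\Cv_1(1)$ and back), not non-blocking rewrites of the current value. The reason $Q \nar{1}{} \underline{R}$ holds is that each genuine state change produces a fresh, lazy copy of the variable process, so by rule \name{Par$_t$} the synchronisation of $\underline{P_{24}}$ with $\Cv_1(\cdot)$ on a read never becomes urgent and can be refused forever --- the same phenomenon as in Example~\ref{ex:lamp-1}. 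The explicit verification you propose as ``the main obstacle'' would therefore not succeed along the lines you describe, even though the cycle itself is a correct witness once the modification and the refusal argument are set up as in the paper.
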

\section{Related works and Conclusion} \label{sec:walker}
This work partly originates from~\cite{Walker89} where Walker aimed at
verifying six MUTEX algorithms with the Concurrency
Workbench~\cite{CleavelandPS89} (CWB, for short). 
Walker translated the algorithms into CCS and then verified the safety
property that the two competing processes are never in their critical
sections at the same time. Regarding the liveness property, Walker first
considered the following interpretation -- which could be expressed as a
modal mu-calculus formula and then checked with the CWB:
\begin{center}
\begin{minipage}{15.5cm}
\small
\em
An algorithm is live if whenever at some point in a computation the
process $\proc_i$ requests the execution of its critical section, then in
any continuation from that point in which between them the processes
execute an infinite number of critical sections, $\proc_i$ performs its
critical section at least once. 
\end{minipage}
\end{center}
The fairness (or progress) assumption assumed here is that infinitely often
a critical section is entered. This assumption allows a run where one
process enters its critical section repeatedly, while the other one
requests the execution of its critical section, but then -- for no good
reason at all -- refuses to take the necessary steps to actually enter it.
So, it may be no surprising that four of the six algorithms (Dekker,
Dijkstra, Lamport and Hyman) fail to satisfy this property. Moreover, in
order to economize on computational effort, the six algorithms
in~\cite{Walker89} have been minimized w.r.t. weak bisimulation. This
allowed Walker to ignore some $\tau$-loops that could invalidate the
liveness property. And, indeed, all of them are not live whenever
the formula expressing the first interpretation of liveness is evaluated
over the transition system that does not abstract from $\tau$'s. By
examining process $\proc_i$, it is clear that these $\tau$-loops arise,
e.g. in Peterson, from repeated reading and writing of variables by the
same process. This is common to all the algorithms and it is not introduced
by the translation into CCS (or in PAFAS). Rather its presence reflects the
faithfulness of the translation itself.

Then, Walker considered the same liveness property we study in
Section~\ref{algos}. To establish that any of the algorithms is live under
this second interpretation, Walker added some assumption. Indeed,
one characteristic of the $\tau$-loops arising from repeated reading and
writing of variables by one process is that the other one is excluded from
an infinite computation of the system. It is natural to ask if {\em
only} the presence of such `unfair' loops prevents any of the
algorithms from being live. So, Walker proposed to use enriched formulas
of the form $F \Rightarrow P$ where $P$ is the property of interest (i.e.\
liveness) and $F$ is a fairness assumption that assumes as admissible
only those paths to which each process contributes infinitely often. 
Even if at the time of writing no automated analysis was possible, Walker
discussed how fairness could be assumed. The basic idea  is to tag
each action with a unique {\em probe} or label; then, we can say the $i$-th
process $\proc_i$ contributes infinitely often to a computation whenever
none of its probes is continuously possible from a certain point on.
Finally, the liveness under this fairness
assumption is expressed by letting $K_i$ be the set of all probes of
$\proc_i$ and defining  $FairLive = FairLive_1 \wedge FairLive_2$
where 
%%%%%%%%%%%%%
$FairLive_i\ =\ (\bigwedge_{a \in K_{i}}\ GF[a] false) \Rightarrow
G(\langle\req_i\rangle true  \Rightarrow F\langle\cs_i\rangle true)$,
%%%%%%%%%%%%%
and the operators $G$ (always), $F$ (future), $\langle\rangle$ (possibly)
and $[]$ (necessarily) are standard modal logics operators.

\begin{table}
\centering
\small
\begin{tabular}{l c c l c c l c c}
 & CWBNC & \fase & &  CWBNC & \fase\  & &  CWBNC & \fase\  \\ \hline
\midrule
$\dekker$ & \checkn\ & \checky\ & \peterson\ & \checkn\ & \checky\
& $\knuth$ & \checkn\ & \checkn\ \\
$\dijkstra$ & \checkn\ & \checkn\ & \lamport\ & \checkn\ & \checkn\ \\
\midrule
\end{tabular}
\caption{Liveness of MUTEX solutions: CWBNC vs. \fase.}
\label{tab:comparing2}
\end{table}

This fairness induced with probes is closely related to fairness of actions
as it has been defined in~\cite{CostaS84,CostaS87}. W.r.t. our
characterisation (cf.\ Section~\ref{PAFAS-s}),  the  main difference is
that, instead of time and time passing, probes are used to decide whenever
an action is continuously enabled along a computation and, hence, must be
performed eventually. To allow a comparison, we have implemented these
ideas within the {\em Concurrency Workbench of the New
Century}~\cite{CLS00} (CWBNC, for short) that, unlike CWB, can handle modal
formulas with fairness constrains. To be able to attach a probe to each
process action, the algorithms have been translated into {\em Timed CCS}
(this is not possible by using the standard CCS language); probes are
introduced by annotating synchronisation actions or $\tau$'s. For instance,
the $i$-th processes of Peterson can be defined as follows:

\hspace{2.5cm}
$\begin{array}{l l}
P_i= \wtb{\it i}(req_i).\wk{j}(a_i). P_{i1} &
P_{i2}= \rk{j}(a_{i}).P_{i1} + \rk{i}(a_{i}).P_{i3}\\
P_{i1}= \rfb{\it j}(b_{i}). P_{i3} + \rtb{\it j}(b_{i}).P_{i2} \quad &
P_{i3}=  \cs_i(cs_i).\wfb{\it i}(a_{i}).P_{i}
\end{array}$

Note that two consecutive actions (as, e.g., $\wtb{\it i}$ and $\wk{j}$ in
$P_i$) never have the same label. Moreover, since the overall number of
labels impacts on the computational effort (see below), we also try to
reduce the number of labels we use.  For example, we can reuse $a_i$ to
label the actions of $P_{i2}$ because none of its actions is adiacent to
$\wk{j}$ and this action has already been executed once $P_{i2}$ is
reached. 

Whenever an action is performed, the corresponding label becomes {\it
visible}\footnote{E.g.,\ if $P_i$ synchronises with $\Bv_i(\fa)$ on the
execution of $\wtb{\it i}$, the label $@req_i$ becomes visible; similarly,
whenever process $P_i$ executes $\cs_i$ we get the label $@cs_i$.} and can
be used as a probe in $FairLive$. 
% Finally, $FairLive$ is translated into a formula in the {\it
% Generalized CTL*} ({\it GCTL*}) logic~\cite{Bhat98}.
% an extension of the temporal logic CTL* \cite{EH86} 
% that is tailored for transition systems labelled by actions.
Table~\ref{tab:comparing2} shows that all the algorithms we consider are
not live according to this second liveness interpretation (also in this
setting, Lamport is live for process $\proc_1$ but not for $\proc_2$). As
an example, consider a path from Peterson along
which the first process reaches $P_{11} = \rfb{2}(b_{1}). P_{13} +
\rtb{2}(b_{1}).P_{12}$, $b_2$ is true and $k$ is 2. Once in such a state,
process $\proc_1$ can read $b_2$ and $k$ and come back to $P_{11}$. Along
this cycle, no probe of $\proc_1$ is continuously possible (probes
$b_1$ and $a_1$ are alternately possible) but $\cs_1$ will never be
performed. So, $FairLive_1$ is false and Peterson is not live. As in
Example~\ref{ex:pet-1}, there is a path along which a
process can be indefinitely blocked by repeated reading. Also in this
setting, the liveness of the algorithm strongly depends on the liveness of
the hardware, i.e.\ on the the possibility of making some behaviours
non-blocking.

As a further counter-check, we again consider Peterson but now we tag its
actions in such a way that the same probe is associated to all the  actions
that appear along consecutive reading (trying to simulate the intuition
behind non-blocking behaviours). So, let us replace $P_{i2}$ with 
$P_{i2}= \rk{j}(b_{i}).P_{i1} + \rk{i}(b_{i}).P_{i3}$. 
Now, whenever in $P_{11}$ and assuming $b_2$ and $k$ equal to true and 2,
the process $\proc_1$ can still repeatedly read variables $b_2$ and
$k$, but the corresponding path is not fair because probe $b_1$ is
continuously possible. With these probes, Peterson and Dekker turn out to
be live.
So, probes can be used to somehow simulate non-blocking actions. But they
must be added and (whenever necessary) tuned by the user by hand. This
task is subject to errors and wrong assumptions that would give erroneous
results.  On the contrary, \fase\ can be more easily used by also a
non-expert user that has only to decide whether (and, in case,
which) non-blocking behaviours are necessary. In our opinion, the use of
probes requires a deeper knowledge of the problem and much
more attention in both modelling and analysis phases. 

Another difference between the two approaches deals with {\em performance}
issues. In Table \ref{tab:comparing3} we report the execution time of both $\fase$ and $\cwbnc$ to perform the analysis on the algorithms discussed in this paper. 
In particular, in~\cite{BCCDV09} an efficient algorithm for detecting catastrophic
cycles has been proposed and implemented. This works in time $O(N + E)$
where $N$ and $E$ are, resp., the number of nodes and edges of the
state space of the process. On the contrary, CWBNC uses an on-the-fly model
checking algorithm whose complexity is exponential in the size of the
formula (see~\cite{CLS00}); in our case, this size strongly depends
on the number of probes.

\begin{table}
\centering
\small
\begin{tabular}{l c c l c c l c c}
 & CWBNC & \fase & &  CWBNC & \fase\  & &  CWBNC & \fase\  \\ \hline
\midrule
$\dekker$ & $103125$  & $119$ & \peterson\ & $4844$\ & $34$\
& $\knuth$ & $110391$\ & $166$\ \\
$\dijkstra$ & $110797$\ & $647$\ & \lamport\ & $1734$\ & $22$\ \\
\midrule
\end{tabular}
\caption{Execution time (expressed in milliseconds): CWBNC vs. $\fase$}
\label{tab:comparing3}
\end{table}

\fase\ is a good first step towards the creation of an integrated
framework for the analysis of concurrent systems. The improvements
introduced by the tool (and, in particular, the possibility to easily
check non-functional properties such as liveness) allows us to derive
results -- as those in this paper -- very hard to prove by hand. Since
these results are very promising, we are currently planning to extend
\fase\ in order to improve the analysis of more complex systems with a
larger state space.

\bibliographystyle{eptcs}
\bibliography{mutex-algs}
\end{document}